\crefname{ineq}{inequality}{inequalities}
\theoremstyle{plain}
\newtheorem{theorem}{Theorem}[section]
\newtheorem{lemma}[theorem]{Lemma}
\newtheorem{corollary}[theorem]{Corollary}
\theoremstyle{definition}
\newtheorem{definition}[theorem]{Definition}
\theoremstyle{remark}
\newcommand*{\N}{{\mathbb{N}}}
\let\R\relax
\newcommand*{\R}{{\mathbb{R}}}
\let\C\relax
\newcommand*{\C}{{\mathbb{C}}}
\newcommand*{\bigsum}{\sum\limits}
\newcommand*{\bigprod}{\prod\limits}
\newcommand*\diff{\mathop{}\!\mathrm{d}}
\newcommand*{\lf}{\mathop{\lfloor\!}}
\newcommand*{\rf}{\mathop{\!\rfloor}}
\newcommand*{\Per}{\operatorname{Per}}
\let\Pr\relax
\newcommand*{\Pr}{\mathbb{P}}
\let\Ex\relax
\newcommand*{\Ex}{\mathbb{E}}
\let\class\relax
\newcommand{\class}[1]{\ensuremath{\mathsf{#1}}\xspace}
\let\poly\relax
\DeclareMathOperator{\poly}{poly}
\let\polylog\relax
\DeclareMathOperator{\polylog}{polylog}
\let\polyloglog\relax
\DeclareMathOperator{\polyloglog}{polyloglog}
\let\Im\relax
\DeclareMathOperator{\Im}{Im}
\let\Re\relax
\DeclareMathOperator{\Re}{Re}
\DeclareMathOperator{\Var}{Var}
\newcommand{\Matrix}{\mathrm{M}}
\newcommand{\eps}{\varepsilon}
\newcommand{\abs}[1]{\left\vert {#1} \right\vert}
\newcommand{\iid}{i.i.d.\xspace}
\newcommand{\cut}{t}
\title{Approximating Permanent of Random Matrices with Vanishing Mean: Made
  Better and Simpler}
\author[1]{Zhengfeng Ji\footnote{Email: zhengfeng.ji@uts.edu.au}}
\author[2]{Zhihan Jin\footnote{Email: pascalprimer@sjtu.edu.cn}}
\author[3]{Pinyan Lu\footnote{Email: lu.pinyan@mail.shufe.edu.cn}}
\affil[1]{Centre for Quantum Software and Information, University of Technology Sydney}
\affil[2]{Zhiyuan College, Shanghai Jiao Tong University}
\affil[3]{Institute for Theoretical Computer Science, Shanghai University of Finance and Economics}
\date{}
\begin{document}

\maketitle

\begin{abstract}
  The algorithm and complexity of approximating the permanent of a matrix is an
  extensively studied topic.
  Recently, its connection with quantum supremacy and more specifically
  BosonSampling draws a special attention to the average-case approximation
  problem of the permanent of random matrices with zero or small mean value for
  each entry.
  Eldar and Mehraban (FOCS 2018) gave a quasi-polynomial time algorithm for
  random matrices with mean at least $1/\polyloglog (n)$.
  In this paper, we improve the result by designing a deterministic
  quasi-polynomial time algorithm and a PTAS for random matrices with mean at least
  $1/\polylog(n)$.
  We note that if it can be further improved to $1/\poly(n)$, it will disprove a
  central conjecture for quantum supremacy.

  Our algorithm is also much simpler and has a better and flexible trade-off for running time.
  The running time can be quasi-polynomial in both $n$ and $1/\eps$, or PTAS (polynomial in $n$ but
  exponential in $1/\eps$), where $\eps$ is the approximation parameter.
\end{abstract}

\section{Introduction}
\label{sec:intro}

The computational complexity of computing the permanent of a matrix is of
central importance to complexity theory and has been extensively studied ever
since Valiant's seminal result~\cite{valiant1979complexity}.
On one hand, the problem is algebraic in nature and plays an important role in
the study of algebraic
complexity~\cite{valiant1979completeness,burgisser2013algebraic}.
In particular, its relation with the determinant is an important
topic~\cite{mignon2004quadratic,cai2010quadratic}.
On the other hand, it also exhibits rich combinatorial properties.
The permanent can be viewed as counting the (weighted) number of perfect
matchings of a bipartite graph and graph (perfect) matching is one of the most
important graph problems in the study of algorithm and
complexity~\cite{edmonds1965maximum,valiant1979complexity,valiant2008holographic}.

Since the exact computation of the permanent is already \class{\textrm{\#}P}
hard for matrices with non-negative entries, or even $0/1$
entries~\cite{valiant1979complexity}, more recent research focuses on either the
\emph{approximation} of the permanent or the \emph{average-case} complexity of
the problem where the matrices are sampled from certain distributions.

In the approximation approach, we require the algorithm to return a number $Z'$
such that, if the actual value of the permanent of the input matrix is $Z$, the
computed number $Z'$ satisfies $\abs{Z - Z'} \le \eps \abs{Z}$ within
running time
$\poly(n, \frac{1}{\eps})$ where $\eps>0$ is the approximation parameter.
This is called a fully polynomial-time approximation scheme (FPTAS).
And its randomized relaxation is called a fully polynomial-time randomized
approximation scheme (FPRAS), where we require that $\abs{Z - Z'} \le \eps \abs{Z}$
holds with high probability.
If the running time is quasi-polynomial in terms of $n$ and $\frac{1}{\eps}$,
namely $2^{\poly(\log (n),\, \log \frac{1}{\eps})}$, then it is called a
quasi-polynomial time approximation scheme.
If we only require the running time to be polynomial in $n$ but not necessary in
$\frac{1}{\eps}$, we call it polynomial-time approximation scheme (PTAS).
On the other hand, in the average-case approach,
we allow the algorithm to be
incorrect on a small fraction of instances with respect to some distributions
over matrices.
Usually, this distribution is over matrices with \iid\ random
entries and the algorithm is required to output either the exact value or an
approximation of the permanent on at least $1 - o(1)$ fraction of the
instances.

In fact, several worst-case approximation tractability and hardness results were known.
For a matrix with non-negative entries, Jerrum, Sinclair and Vigoda gave a
remarkable FPRAS to approximate its permanent~\cite{jerrum2004polynomial} via
random sampling by Markov chain Monte Carlo (MCMC).
How to derandomize this algorithm remains a long-standing open problem.
However, it is impossible to extend this result to general matrices
since it is already
\class{\textrm{\#}P}-hard to compute the sign of the permanent with possibly
negative entries.
Indeed, negative or complex values put this problem in \class{GapP}, a
superset of \class{\textrm{\#}P}~\cite{FENNER1994}.
This difficulty is referred to as the ``interference barrier''.
For example, random sampling based algorithms are no longer applicable since we
cannot define negative or complex probability.
For specific families of complex matrices, there are quasi-polynomial time
approximation schemes by Barvinok's
interpolation~\cite{barvinok2016computing,barvinok2019computing}.

The above algorithms and hardness results are all worst-case analysis.
What do we know about the average-case complexity?
It turns out that, for exact counting, the average-case problem remains
\class{\textrm{\#}P}-hard both for finite field entries and complex Gaussian
entries~\cite{CaiPS99,aaronson2013computational}.
This leaves the complexity of the average-case approximation of the permanent an
intriguing problem.
Yet, very little was known when both average-case analysis and approximation
are considered at the same time.
More generally, while we have \class{\textrm{\#}P}-hardness results for all
other settings including worst-case approximation problems and average-case
exact problems, essentially no hardness result is known for average-case
approximate counting problems.
On the tractability side, some recent algorithms and techniques show that random
instances might be much easier than the worst-case for approximate
counting~\cite{jenssen2019algorithms,LiaoLLM19,mann2019approximation,eldar2018approximating}.
Our result also adds to this list.

An important motivation for studying the complexity of approximating the
permanent of random matrices stems from the so-called BosonSampling program
initiated by Aaronson and Arkhipov~\cite{aaronson2013computational} in quantum
computing.
In~\cite{aaronson2013computational}, the conjectured
\class{\textrm{\#}P}-hardness of approximating the permanent of Gaussian
matrices (Permanent-of-Gaussian) is connected to the sampling problem of linear
optical networks so that the existence of any efficient classical simulation of
this optical sampling process will imply $\class{P}^{\class{\textrm{\#} P}}
{=}\, \class{BPP}^{\class{NP}}$, and hence the collapse of the polynomial
hierarchy by Toda's theorem.
This provides an explicitly defined problem which near-term quantum computing
devices can efficiently solve while even today's most powerful classical
supercomputer cannot.
Such a dramatic contrast in computing powers, called quantum supremacy, poses
the first serious challenge to the extended Church-Turing thesis and has been
recently experimentally achieved by the Google team using a different model
based on the random quantum circuit sampling problem~\cite{arute2019quantum}
while the record of BosonSampling experiment is recently updated
by~\cite{wang2019boson}.

The complexity of approximating the permanent of random matrices is obviously of
vital importance to BosonSampling as it serves as one of the two conjectures on
which the theory of BosonSampling bases.
In particular, it is assumed in BosonSampling that approximating the permanent
of random matrices whose entries are \iid\ sampled from the normal distribution
of zero mean value and unit variance is \class{\textrm{\#}P} hard.
The result is strengthened in~\cite{eldar2018approximating} showing that a
biased distribution with mean at most $1/\sqrt{n}$ is also good enough for
BosonSampling.
There is no clue yet on how one can prove such hardness results and it is not
even clear whether they are true or not.
On the other side, a surprising and interesting tractable result was obtained by
Eldar and Mehraban~\cite{eldar2018approximating}.
They provided a quasi-polynomial time algorithm to approximate the permanent of
random matrices with mean of $1/\polyloglog (n)$, which implies that the
\class{\textrm{\#}P}-hardness is unlikely to hold for these families of random
matrices.
This raises the interesting open question of whether the algorithm can be
extended to the case with mean value $1/\poly(n)$
and disprove the hardness
conjecture of BosonSampling or there is a ``phase transition'' in the complexity
of approximating the permanent with respect to the mean of matrix entries.

\subsection{Our results}

In this paper, we provide an exponential improvement in terms of the tractable
region of the mean values to the problem of approximating the permanent of
random matrices with vanishing mean value.
We design a deterministic quasi-polynomial time algorithm and a PTAS that can compute the
multiplicative approximation for $1-o(1)$ fraction of random matrices with mean
at least $1/\polylog (n)$.
See \cref{thm:main}, \cref{cor:main}, \cref{cor:main2} for more rigorous statements of our results.
The strength of our results lies in the following four aspects.

Firstly and most importantly, the range of the tractable mean value parameters
is exponentially better than that of~\cite{eldar2018approximating}.
In~\cite{eldar2018approximating}, their algorithm can only approximate the
permanent of a random matrix with mean value of at least $1/\polyloglog (n)$.
Our algorithm works for all mean value that is $1/\polylog (n)$.
The exact range of mean values for which such approximation exists is extremely
important due to its role in the ``quantum supremacy''.
If one can further improve the mean to $1/\poly(n)$, it will disprove the
conjecture in~\cite{aaronson2013computational}.

Secondly, the algorithm in~\cite{eldar2018approximating} only works for some,
but not all, mean values $\mu > 1/\polyloglog (n)$.
This is a very strange situation due to their proof techniques and is rather
counterintuitive as one would expect that the larger the mean value is,
the easier it is to approximate the permanent.
There is not even an algorithm for them to check whether a given mean is
computable or not for their algorithm since they used a probabilistic argument
while our algorithm does not suffer from such problems and works for all $\mu >
1/\polylog (n)$.

Thirdly, our algorithm uses a completely different idea and is arguably simpler.
The simplicity of our algorithm also enables us to extend our result to a
much larger family of entry distributions.
While the technique of~\cite{eldar2018approximating} is very interesting and
uses Barvinok's interpolation method for approximate counting with several new
developments of the technique in a few directions, the need to prove the
zero-freeness of the polynomial in a segment-like region made the proof rather
involved and caused the drawbacks of their result mentioned above in the
previous two items.
In our algorithm, we avoid all these complications and simply truncate a simple
expansion of the permanent directly to
$\mathcal{O}(\ln n + \ln \frac{1}{\epsilon})$ terms and compute
them by brute-force.
We note that in usual usage of Barvinok's method, a truncation of the
polynomial directly rather than its logarithm will not succeed.
To prove the correctness of our algorithm, we need a careful study of the
distribution of the permanent of random matrices.
To this end, we use several techniques inspired by the analysis
of~\cite{rempala2004limit}.
The result and analysis in~\cite{rempala2004limit} are asymptotical while we
need much more careful quantitative bounds which we develop carefully in this
paper.

Lastly, the running time of our algorithm is also better and flexible.
It can be quasi-polynomial in both $n$ and $1/\eps$, where $\eps$ is the approximation parameter.
We can also make it a PTAS, which is polynomial in $n$ but exponential in $1/\eps$.
In particular in the range of $\eps>n^{-\rho}$ for some constant $\rho>0$, the algorithm
is extremely simple and runs in only linear time. It is not clear how to make the previous
algorithm in polynomial time rather than quasi-polynomial time  even for a fixed constant $\eps$.

This work leaves several interesting open problems.
First, the most important problem left open is to either show the transition of
complexity with respect to the mean value and prove that the corresponding
problem is hard when the mean value is $1/\poly(n)$ or disprove the
Permanent-of-Gaussian conjecture of BosonSampling.
With our technique only, it is rather hard to go beyond the $1/\polylog (n)$
barrier and essential new ideas seem necessary if this is ever possible.
Second, while we have been focusing exclusively on the problem of approximating
the permanent and therefore it is only directly relevant to the BosonSampling
scheme of quantum supremacy, we expect that our technique may find applications
in understanding other average-case approximate counting problems and the
hardness assumptions in other quantum supremacy schemes such as the
instantaneous quantum computing model~\cite{bremner2010classical} and the random
circuit sampling model~\cite{arute2019quantum}.
We believe that such generalizations are possible as the hardness conjectures
behind different models of quantum supremacy are of the same flavor and it is
usually possible to generalize results from one model to the
other (see e.g.~\cite{bouland2018quantum}).

The rest of the paper is organized as follows.
In \cref{sec:prelim}, we introduce the notations used in this paper.
We state and present the proof outline in \cref{sec:main}.
The remaining sections contain the technical lemmas used in \cref{sec:main}.

\section{Preliminary}
\label{sec:prelim}

In this paper we use $[n]$ to denote the set $\{1, \cdots, n\}$.
The set of natural numbers, real numbers, and complex numbers are denoted as
$\N$, $\R$, and $\C$ respectively.
We use $n^{\underline{k}} \triangleq n (n - 1) \cdots (n - k + 1)$ to denote the
downward factorial and $C_{n, k}, P_{n, k}$ to denote all $k$-subsets of $[n]$
and all $k$-permutations of $[n]$ respectively.
$\Matrix_n(\C)$ to denote the set of all $n \times n$ complex matrices.
$\delta_{i, j}$ is the Kronecker function, i.e., $\delta_{i, j} = 1$ if $i = j$
and $\delta_{i, j} = 0$ otherwise.

\begin{definition} \label{def:s-e-poly}
  Suppose $x_1, x_2, \cdots, x_n \in \C$ and $0 \le k \le n$,
  the power sum is defined by
  \[
    S_k(x_1, x_2, \cdots, x_n) \triangleq \sum_{i = 1} ^ n x_i^k
  \]
  and the $k$th elementary symmetric polynomial is defined by
  \[
    e_k(x_1, x_2, \cdots, x_n) \triangleq
    \sum_{\{i_1, \cdots, i_k\} \in C_{n, k}} x_{i_1} \cdots x_{i_k}
  \]
  with convention $e_0(x_1, x_2, \cdots, x_n) = 1$.
  We will write $S_k(n)$ and $e_k(n)$ if the variables $x_i$'s are clear from
  context.
\end{definition}

\begin{definition}
  \label{def:entry-distr}
  The entry distribution $\mathcal{D}_\mu$ with mean value $\mu$ is a
  distribution over complex numbers such that
  \begin{equation*}
    \Ex_{x \sim \mathcal{D}_\mu} [x] = \mu,\quad
    \Var_{x \sim \mathcal{D}_\mu} [x] = 1,
  \end{equation*}
  and
  \begin{equation*}
    \Ex_{x \sim \mathcal{D}_\mu} \abs{x - \mu}^3 = \rho < \infty.
  \end{equation*}
  We use $\mathcal{D}$ to denote $\mathcal{D}_0$.
\end{definition}

In this paper, we use $\xi$ to denote the quasi-variance of $\mathcal{D}_\mu$,
\begin{equation*}
  \xi = \Ex_{x \sim \mathcal{D}_\mu} (x-\mu)^2.
\end{equation*}
The norm of the quasi-variance is upper bounded by the variance as
\begin{equation}
  \label{eq:xi-bound}
  \begin{split}
    \abs{\xi}\, & =\, \abs{\Ex_{x \sim \mathcal{D}_\mu} (x-\mu)^2}\\
    & \le\, \Ex_{x \sim \mathcal{D}_\mu} \abs{x-\mu}^2\\
    & =\, \Var_{x \sim \mathcal{D}_\mu} (x) = 1.
  \end{split}
\end{equation}

\begin{definition}
  \label{def:matrix-distr}
  The matrix distribution $\mathcal{M}_{n,\, \mu}$ is the distribution over
  $\bm{R} \in \Matrix_n(\C)$ such that the entries of $\bm{R}$ are \iid\ sampled
  from $\mathcal{D}_\mu$.
\end{definition}

Our aim is to design an average-case approximation algorithm for the permanent
of a random matrix $\bm{R} \sim \mathcal{M}_{n,\, \mu}$ for $\mu =
\polylog^{-1}(n)$.
Following the notation used in~\cite{eldar2018approximating}, we introduce a
matrix $\bm{X} = \bm{J} + z \bm{A}$ where $z$ is a complex variable taken to be
$1/\mu$ in the end, $\bm{J}$ is the all-ones matrix, and $\bm{A}$ is a random
matrix with \iid\ entries sampled from $\mathcal{D}$.
We note that $\frac{\bm{X}}{z} \sim \mathcal{M}_{n,\, \mu}$ with $z=1/\mu$ and
thus it is equivalent to compute the permanent of matrix $\bm{X}$.

\begin{definition}
  Suppose $\bm{A} \in \Matrix_n(\C), \bm{B} \in \Matrix_k(\C)$.
  We write $\bm{B} \subseteq_k \bm{A}$ if $\bm{B}$ is a $k \times k$ submatrix
  of $\bm{A}$.
\end{definition}

\begin{lemma}
  \label{lem:perm-expansion}
  Suppose $\bm{A} \in \Matrix_n(\C)$ is any matrix and $\bm{J}$ is the
  all-ones matrix of size $n$.
  For $k=0, 1, \ldots, n$, define
  \begin{equation}
    \label{eq:a_k}
    a_k = \frac{1}{n^{\underline{k}}}\, \sum_{\bm{B} \subseteq_k \bm{A}}
    \Per(\bm{B}),
  \end{equation}
  Then for all $z \in \C$ we can write
  \begin{equation*}
    \frac{\Per(\bm{J} + z \bm{A})}{n!} = \sum_{k=0}^n a_k z^k.
  \end{equation*}
\end{lemma}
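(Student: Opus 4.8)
The plan is to expand both sides combinatorially from the definition of the permanent and match coefficients of $z^k$. Writing $\Per(\bm{M}) = \sum_{\sigma \in S_n} \prod_{i=1}^n M_{i,\sigma(i)}$, I would start from
\[
  \Per(\bm{J} + z\bm{A}) = \sum_{\sigma \in S_n} \prod_{i=1}^n \bigl(1 + z A_{i,\sigma(i)}\bigr),
\]
and expand the product over $i$ by choosing, for each row $i$, whether to take the $1$ from $\bm{J}$ or the $zA_{i,\sigma(i)}$ term. This gives $\prod_{i=1}^n (1 + zA_{i,\sigma(i)}) = \sum_{T \subseteq [n]} z^{|T|} \prod_{i \in T} A_{i,\sigma(i)}$, so after interchanging the two finite sums,
\[
  \Per(\bm{J} + z\bm{A}) = \sum_{k=0}^n z^k \sum_{\substack{T \subseteq [n] \\ |T| = k}} \sum_{\sigma \in S_n} \prod_{i \in T} A_{i,\sigma(i)}.
\]

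The key step is to evaluate the inner double sum for a fixed $k$. For a fixed row set $T$ with $|T| = k$, a permutation $\sigma \in S_n$ restricts to an injection $T \to [n]$, i.e.\ to a choice of a $k$-element column set $U = \sigma(T)$ together with a bijection $\tau \colon T \to U$; conversely every such pair $(U,\tau)$ arises from exactly $(n-k)!$ permutations $\sigma$, namely those extending $\tau$ by an arbitrary bijection $[n]\setminus T \to [n]\setminus U$. Hence
\[
  \sum_{\sigma \in S_n} \prod_{i \in T} A_{i,\sigma(i)} = (n-k)! \sum_{\substack{U \subseteq [n] \\ |U| = k}} \sum_{\tau \colon T \to U} \prod_{i \in T} A_{i,\tau(i)} = (n-k)! \sum_{\substack{U \subseteq [n] \\ |U| = k}} \Per\bigl(\bm{A}[T,U]\bigr),
\]
where $\bm{A}[T,U]$ denotes the $k\times k$ submatrix of $\bm{A}$ on rows $T$ and columns $U$. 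Summing over all $T$ of size $k$ and noting that the pairs $(T,U)$ are in bijection with the $k\times k$ submatrices $\bm{B} \subseteq_k \bm{A}$, the coefficient of $z^k$ becomes $(n-k)! \sum_{\bm{B} \subseteq_k \bm{A}} \Per(\bm{B})$.

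Putting this back together yields $\Per(\bm{J} + z\bm{A}) = \sum_{k=0}^n (n-k)!\,z^k \sum_{\bm{B} \subseteq_k \bm{A}} \Per(\bm{B})$, and dividing by $n!$ and using $n^{\underline{k}} = n!/(n-k)!$ gives exactly $\sum_{k=0}^n a_k z^k$ with $a_k$ as in \eqref{eq:a_k}. Since all manipulations are identities of polynomials in $z$ (equivalently, they hold coefficient-wise), the identity is valid for all $z \in \C$. I do not anticipate a genuine obstacle here; the only point requiring care is the bookkeeping in the counting step, where one must verify that the extension count is $(n-k)!$ independently of $\tau$ and that the sum over bijections $T \to U$ of $\prod_{i\in T} A_{i,\tau(i)}$ is precisely $\Per(\bm{A}[T,U])$ — both of which are immediate from the definitions once the correspondence between $\sigma|_T$ and pairs $(U,\tau)$ is set up cleanly.
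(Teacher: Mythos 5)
Your proposal is correct and is essentially the paper's argument: the paper phrases the same decomposition in the language of weighted perfect matchings of a complete bipartite graph (a $k$-matching of $G_{\bm{A}}$ carrying the $z$-weights, completed by one of the $(n-k)!$ perfect matchings on the remaining vertices of $G_{\bm{J}}$), which corresponds exactly to your pairs $(T,U,\tau)$ and the $(n-k)!$ extension count. Your version is just a more explicit algebraic rendering of that same coefficient computation, so no substantive difference.
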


\begin{proof}
  This identity can be obtained by simple calculate and similar formula has
  appeared in~\cite{eldar2018approximating,rempala2004limit}.
  Given any $n \times n$ matrix $\bm{M}$, define $G_{\bm{M}}$ to be the
  corresponding complete bipartite graph with $n$ vertices on each side, both
  numbered from 1 to n, where the weight of edge $e = (i, j)$ is simply
  $\bm{M}_{i, j}$.
  Define the weight of any perfect matching in $G_{\bm{M}}$ to be the product of
  weights of all edges in it.
  By definition, permanent of $\bm{M}$ can be seen as weights of all perfect
  matchings of $G_{\bm{M}}$.
  Ideally, we can split any perfect matching of $G_{\bm{J} + z\bm{A}}$ into
  combinations of a $k$-matching of $G_{\bm{J}}$ and a $(n - k)$-matching of
  $G_{z\bm{A}}$ for some $k$ such that the two matchings together form a perfect
  matching.
  Regarding $\Per(\bm{J} + z\bm{A})$ as a polynomial of $z$, the coefficient of
  $z^k$ is simply the summation of weights over all $k$-matchings of $G_{\bm{A}}$
  times the weights of all perfect matchings in the left graph of $G_{\bm{J}}$,
  which is $(n - k)!$.
  Thus the lemma follows by rescaling.
\end{proof}

We record here some basic inequalities that we use extensively in the proofs
\begin{equation}
  \begin{aligned}
    n! \ge \left(\dfrac{n}{e}\right)^n \quad \forall n \in \N \,, \\
    (1 + x)^y \le e^{xy} \quad \forall x, y > 0\,.
  \end{aligned}
\end{equation}
And we always assume $0^0 = 1$ in this paper.

\section{Main Result}
\label{sec:main}

In this section, we state our main result and describe the overall proof
structures.
The key technical lemmas are proved in the later sections.

\begin{theorem}
  \label{thm:main}
  For any constant $c \in (0, \frac{1}{8})$, there exists a deterministic
  quasi-polynomial time algorithm $\mathcal{P}$ such that, given both a matrix
  $\bm{R}$ sampled from $\mathcal{M}_{n,\, \mu}$ defined in
  \cref{def:matrix-distr} for $\abs{\mu} \ge \ln^{-c} (n)$ and a real number
  $\eps \in (0,1)$ as input, the algorithm computes in time $n^{\mathcal{O}(\ln
    n + \ln \frac{1}{\eps})}$ a complex number $\mathcal{P}(\bm{R}, \eps)$ that
  approximates the permanent $\Per(\bm{R})$ on average in the sense that
  \begin{equation*}
    \Pr \biggl( \abs{ 1 - \frac{\mathcal{P}(\bm{R}, \eps)}{\Per(\bm{R})}}
    \le \eps \biggr) \ge 1 - o(1),
  \end{equation*}
  where the probability is over the random matrix $\bm{R}$.
\end{theorem}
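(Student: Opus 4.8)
The plan is to use the polynomial expansion from \cref{lem:perm-expansion}. Writing $\bm{X} = \bm{J} + z\bm{A}$ with $z = 1/\mu$ and $\bm{A} \sim \mathcal{M}_{n,0}$, we have $\Per(\bm{R}) = z^{-n}\Per(\bm{X})$ where actually we should track scaling carefully: $\Per(\bm{X})/n! = \sum_{k=0}^n a_k z^k$ with $a_k = \frac{1}{n^{\underline{k}}}\sum_{\bm{B}\subseteq_k\bm{A}}\Per(\bm{B})$. The algorithm $\mathcal{P}$ simply truncates this sum at $t = \Theta(\ln n + \ln\frac1\eps)$ terms: it computes $\widetilde{Z} = n!\sum_{k=0}^{t} a_k z^k$ by brute force. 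Each $a_k$ is a sum over $\binom{n}{k}^2 k!$ submatrix permanents, each computable in $2^{\mathcal{O}(k)}$ time, so the total running time is $n^{\mathcal{O}(t)} = n^{\mathcal{O}(\ln n + \ln\frac1\eps)}$, which is quasi-polynomial and deterministic. So the entire content of the theorem is the approximation guarantee, namely that with probability $1-o(1)$ over $\bm{A}$,
\[
  \Bigl| \sum_{k>t} a_k z^k \Bigr| \;\le\; \eps\, \Bigl| \sum_{k=0}^n a_k z^k \Bigr|.
\]

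The first step is to understand the leading term. Since $z = 1/\mu$ is large (as $|\mu| \le 1$), the dominant contribution should come from small $k$, and in fact $a_0 = 1$, so $\sum_k a_k z^k = 1 + a_1 z + \cdots$. The key is a two-sided estimate: (i) a lower bound on $|\sum_{k=0}^n a_k z^k|$, showing it is bounded below by a constant (or at least $n^{-\mathcal{O}(1)}$) with high probability — here one expects concentration of the low-order coefficients $a_k$ around their means, together with the fact that $\Ex[a_k]$ behaves predictably (roughly, $a_k$ concentrates near some explicit value, e.g. $\Ex\Per(\bm{B})$ for a random $k\times k$ submatrix of a zero-mean matrix is $0$ for $k\ge 1$, so actually $\Ex[a_k]=0$ for $k\ge1$ and we need second-moment control); and (ii) an upper bound on the tail $|\sum_{k>t} a_k z^k|$. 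For the tail one needs to show $|a_k|$ decays fast enough in $k$ — the natural quantity to control is $\Var(a_k)$ or $\Ex|a_k|^2$, and one expects something like $\Ex|a_k|^2 \approx \frac{k!}{n^{\underline{k}}} \le (\text{const})^k/\cdots$ combined with $|z|^k = |\mu|^{-k} \le \ln^{ck}(n)$. Since $\ln^{ck}(n)$ grows only quasi-polynomially while $\frac{k!}{n^{\underline k}}$ and similar factors should contribute decay, summing $k$ from $t+1$ to $n$ and using a union bound / Markov inequality over this range should give that the tail is $\le \eps$ with probability $1-o(1)$, provided $t \gtrsim \ln n + \ln\frac1\eps$. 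This is exactly where the threshold $c < \frac18$ and $|\mu|\ge \ln^{-c}(n)$ enter: the constant $c$ must be small enough that $|z|^k = |\mu|^{-k}$ does not overwhelm the combinatorial decay of the coefficient moments.

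I would organize the argument by first computing $\Ex[a_k]$ and $\Var(a_k)$ (or a suitable higher moment) exactly or up to explicit constants — this is the "careful quantitative bounds inspired by \cite{rempala2004limit}" step the introduction promises, and it reduces to symmetric-function identities relating $\Per(\bm{B})$ to power sums / elementary symmetric polynomials of the eigenvalues or to combinatorial expansions over permutations. Then, second, I would prove the lower bound on $|\sum_{k=0}^n a_k z^k|$ via a second-moment argument showing $\sum_{k\ge1} a_k z^k$ is small compared to the constant term $a_0=1$ with high probability (or, if that is too optimistic, showing the whole sum concentrates around a deterministic value bounded away from $0$). Third, the tail bound via moment estimates and a union bound over $k \in \{t+1,\dots,n\}$. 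Finally, combine: on the high-probability event where both bounds hold, $\widetilde Z$ is within $\eps$ multiplicatively of $\Per(\bm{X})$, hence of $\Per(\bm{R})$ after rescaling.

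The main obstacle I anticipate is the lower bound in step two — controlling $|\sum_{k=0}^n a_k z^k|$ away from zero. The tail decay (step three) is a fairly mechanical moment computation once the variance formula is in hand, but showing the full polynomial does not accidentally vanish or become tiny requires genuine understanding of the joint distribution of the $a_k$'s and how cancellation among the $z^k$ terms behaves; this is presumably where the bulk of the technical work and the restriction $c < \frac18$ come from. A secondary subtlety is making sure the $o(1)$ failure probability is uniform and that the moment bounds are strong enough to survive the union bound over $n - t$ values of $k$ (which may force using a higher moment than the second, or a more careful Bernstein-type concentration, rather than a crude Markov bound).
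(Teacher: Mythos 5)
Your algorithm and the easy half of the analysis are exactly the paper's: truncate $\Per(\bm{J}+z\bm{A})/n!=\sum_k a_k z^k$ at $\cut=\Theta(\ln n+\ln\frac{1}{\eps})$, compute the coefficients by brute force, and control the tail through second moments of the $a_k$. (For the record, the orthogonality relation is $\Ex[a_k\overline{a_\ell}]=\delta_{k,\ell}/k!$, not $\approx k!/n^{\underline{k}}$, and because of it a single Chebyshev bound on the whole tail sum suffices --- no union bound over $k$ and no higher moments are needed; this is \cref{lem:ak-mean-variance,lem:a_k-tail-bound}.)

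The genuine gap is the step you yourself flag as the obstacle, the lower bound on $\abs{\sum_k a_k z^k}$, and both routes you sketch for it would fail. A second-moment argument cannot show that $\sum_{k\ge1}a_k z^k$ is small compared with $a_0=1$: since $\Ex\abs{a_k z^k}^2=\abs{z}^{2k}/k!$ and $\abs{z}=1/\abs{\mu}$ may be as large as $\ln^c n\to\infty$, the variance of $\sum_{k\ge1}a_k z^k$ is $e^{\abs{z}^2}-1\to\infty$, so this sum typically dwarfs the constant term rather than being negligible against it. Nor does the full sum concentrate around a deterministic nonzero value: the normalized permanent behaves like the random variable $e^{V_1 z-\xi z^2/2}$ (with $V_1$ the normalized grand sum of the entries), whose modulus fluctuates multiplicatively by factors $e^{\Theta(\abs{z})}$. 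What is actually needed is an anti-concentration statement --- that the truncated sum is rarely smaller than $n^{-\gamma}$ --- and this cannot be extracted from means and variances of the $a_k$ alone, since such moment information is perfectly compatible with the sum landing near $-a_0$. The paper's route to the lower bound is structurally different: it approximates $\sum_{k\le\cut}a_kz^k$ by $\sum_{k\le\cut}V_kz^k$ with $V_k$ rescaled elementary symmetric polynomials of the column sums (\cref{lem:sigma-ak-sigma-vk-difference}), uses Newton's identities together with concentration of the power sums ($D_2\approx\xi$ and $\abs{D_k}\le n^{-\Delta k}$ for $k\ge3$; \cref{lem:bound-d2,lem:bound-dk}) to replace $V_k$ by an explicit Hermite-polynomial recursion $V_k'$ (\cref{lem:eps-bound,lem:v-k-v-k'-sum}), sums the $V_k'$ in closed form to $e^{V_1z-\xi z^2/2}$, and finally lower-bounds that exponential by $n^{-\gamma}$ using only concentration of $V_1$ (\cref{lem:V1,lem:v'-summation-big}). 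Identifying this random (not deterministic) limit object is the missing idea in your outline; it is also where the restriction $c<\frac18$ actually enters (through the third-moment and $D_k$ estimates), not in the tail bound, which works for any $c<\frac12$.
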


\begin{proof}
  As discussed in \cref{sec:prelim}, we will work with the permanent of matrix
  $\bm{X} = \bm{J} + z\bm{A}$ where $\bm{J}$ is the all-ones matrix.
  In the following, we design an algorithm that can approximate
  $\Per(\bm{X})$ on average for $\bm{A} \sim \mathcal{M}_n$.
  The algorithm $\mathcal{P}$ and its performance then follow by a simple
  scaling argument.

  Since $\Per(\bm{X})$ is a summation of $n!$ products, it is convenient to
  focus on computing the normalized permanent $\frac{\Per(\bm{X})}{n!}$, which
  can be written as $\bigsum_{k=0}^n a_kz^k$ by \cref{lem:perm-expansion} for
  \begin{equation*}
    a_k = \frac{1}{n^{\underline{k}}} \sum_{\bm{B} \subseteq_k \bm{A}} \Per(\bm{B}).
  \end{equation*}

  We first fix the parameters used commonly
  in the rest of the proof to be any solution of the following equalities.
  \begin{equation}
    \label{main-parameters}
    \begin{cases}
      0 < c < \nu < \frac{1}{8}, \\
      0 < \gamma < \beta < \frac{1}{2}, \\
      0< \gamma < \nu - c, \\
      \abs{z} \le (\ln n)^c, \\
      t = \ln n + \ln \frac{1}{\eps}, \\
      \theta(n) = \ln\ln n.
    \end{cases}
  \end{equation}
  Note that $z$ can be a complex number.

  To approximate the normalized permanent of $\bm{X}$, our algorithm computes
  the first $\cut+1$ coefficients $a_0, a_1, \ldots, a_\cut$
  and outputs the number $\sum_{k=0}^\cut a_k z^k$.
  For such choice of $\cut$, the algorithm has time complexity
  $\mathcal{O}(n^{2\cut} \cdot \poly(\cut) \cdot \cut!)$,
  which is quasi-polynomial.
  The rest of the proof is to show that the first $\cut + 1$ terms in the
  summation is actually an $\Theta(\eps)$ approximation of
  $\frac{\Per(\bm{X})}{n!}$.

  The easy part is to prove that the remaining terms are indeed small with high
  probability. Namely, w.p. at least $1 - o(1)$,
  \begin{equation}
    \label{eq:a_k-tail-bound}
    \left\vert \bigsum_{k = \cut + 1}^n a_k z^k \right\vert
    \le n^{-\gamma}\, \eps
  \end{equation}
  This is small in absolute sense.
  To show that it is small relatively, we need to give a lower bound of $\sum_{k
    = 0}^\cut a_k z^k$. Namely, w.p. at least $1 - o(1)$,
  \begin{equation}
    \label{eq:a_k-large}
    \left\vert \sum_{k = 0}^\cut a_k z^k \right\vert \ge \Theta(n^{-\gamma}).
  \end{equation}
  As the constant in $\Theta(n^{-\gamma})$ does not depend on $\eps$, these two
  facts together give a proof of the main theorem.

  The former fact in \cref{eq:a_k-tail-bound} is proven in
  \cref{lem:a_k-tail-bound}.
  The later bound in \cref{eq:a_k-large} is more difficult
  because we do not know how to bound $a_k$'s directly.
  To overcome this, we use symmetric polynomials of the column sum of matrix
  $\bm{A}$ to approximate the permanent.

  For all $j=1, 2, \ldots, n$, define
  \begin{equation}
    \label{eq:column-sum}
    C_j \triangleq \frac{1}{\sqrt{n}} \sum_{i = 1}^n a_{i,j},
  \end{equation}
  where $a_{i,j}$ is the $(i,j)$-the entry of matrix $\bm{A}$ and
  for all $k=0,1, 2, \ldots, n$,
  \begin{align}
    \label{eq:V_k}
    V_k & \triangleq \frac{1}{n^{k/2}}\, e_k (C_1, C_2, \ldots, C_n),\\
    \label{eq:D_k}
    D_k & \triangleq \frac{1}{n^{k/2}}\, S_k (C_1, C_2, \ldots, C_n),
  \end{align}
  where polynomials $S_k$ and $e_k$ are defined in \cref{def:s-e-poly}.

  The first two terms are exactly the same, namely $V_0=a_0$ and $V_1=a_1$.
  Consider $C_j$ and $V_k$ as multivariable polynomials of $a_{i,j}$'s.
  Intuitively, as $n \to \infty, V_k$ and $a_k$ share almost all monomials, thus
  nearly of the same value.
  In particular, $V_k$ is an unbiased estimation of $a_k$.
  Formally, we prove in \cref{lem:sigma-ak-sigma-vk-difference} that w.p.
  $1 - o(1)$,
  \begin{equation}
    \label{eq:sigma-ak-sigma-vk-difference}
    \left\vert \bigsum_{k = 0}^\cut a_k z^k - \bigsum_{k = 0}^\cut V_k z^k
    \right\vert \le n^{-\beta} = o(n^{-\gamma}),
  \end{equation}
  which is negligible compared to the target $n^{-\gamma}$.

  As proven in \cref{lem:product-sum-formula}, $V_k$ satisfy the so-called
  Newton's identities.
  \begin{equation}
    \label{eq:v-k-recursion}
    V_k = \frac{V_{k - 1} V_1 - V_{k - 2} D_2 + \sum_{i = 2}^{k - 1}(-1)^i
      V_{k-1-i} D_{i+1}}{k} \quad\text{for all } k \ge 2.
  \end{equation}

  Further more, we prove in \cref{lem:bound-d2} that $D_2$ is
  concentrated at $\xi$, the quasi-variance of $\mathcal{D}$, and
  in \cref{lem:bound-dk} that $D_k$ is
  polynomially small for $k\geq 3$ both with high probability.
  This motivates us to consider $V'_k$, an asymptotic approximation of $V_k$, as
  follows
  \begin{equation}
    \label{eq:v-k'-recursion}
    V_k' =
    \begin{cases}
      1, & k = 0, \\
      V_1, & k = 1, \\
      \dfrac{V_{k-1}'V_1' - V_{k-2}'\xi}{k}, & k \ge 2.
    \end{cases}
  \end{equation}
  Note that $k$ can be larger than $n$ for notation convenience when analyzing
  $\sum_{k=0}^\infty V_k' z^k$.
  And we prove in \cref{lem:eps-bound} that
  $V_k$ and $V'_k$ are close and in \cref{lem:v-k-v-k'-sum} that
  w.p. $1-o(1)$,
  \begin{equation}
    \label{eq:v-k-v-k'-sum}
    \left\vert \bigsum_{k = 0}^\cut V_k\, z^k - \bigsum_{k = 0}^\cut V'_k\, z^k
    \right\vert = \mathcal{O}(n^{c - \nu}) = o(n^{-\gamma}).
  \end{equation}

  Comparing the two recursions, we use the ``probabilists' Hermite polynomials''
  to explicit express $(V_k')$'s in \cref{eq:vc'-formula}.
  Due to \cref{lem:v-k-summation-tail}, the summation $\sum_{k = 0}^\cut V'_k
  z^k$ can be estimated by $\sum_{k = 0}^\infty V_k' z^k$ with a negligible
  $n^{-\omega(1)}$ additive error by an upper-bound of ``probabilists' Hermite
  polynomials'' in \cref{lem:hermite-bound}.
  This, together with \cref{eq:sigma-ak-sigma-vk-difference,eq:v-k-v-k'-sum},
  implies that it is enough to give a $\Omega(n^{-\gamma})$ lower-bound of
  $\abs{\sum_{k = 0}^{\infty} V_k' z^k}$.

  On the other hand, from \cref{eq:v-k'-summation}, $\sum_{k = 0}^\infty V_k'
  z^k$ is simply $e^{V_1 z - \frac{\xi z^2}{2}}$, where $V_1$ is the normalized
  average of all entries in $\bm{A}$.
  By Chebyshev's inequality, we know that $V_1$ is small with high probability.
  This can be used to prove the fact in \cref{lem:v'-summation-big} that w.p.
  $1-o(1)$,
  \[
    \abs{ e^{V_1 z - \frac{\xi z^2}{2}} } \ge n^{-\gamma},
  \]
  which completes the proof.
\end{proof}

If we relax the approximation requirement a bit, we can simply compute $V_1$
and return $n! e^{V_1 z - \frac{\xi z^2}{2}}$ as an approximation of $\Per(\bm{X})$.
This is a truly polynomial time algorithm and extremely simple. By the above argument,
we can get the following approximation guarantee.

\begin{corollary}
  \label{cor:main}
  For any constant $c \in (0, \frac{1}{8})$ and $0< \rho < \frac{1}{8}-c $, there
  exists a deterministic polynomial time algorithm $\mathcal{P}$ such that,
  given a matrix $\bm{R}$ sampled from $\mathcal{M}_{n,\, \mu}$ defined in
  \cref{def:matrix-distr} for $\abs{\mu} \ge \ln^{-c} (n)$, the algorithm
  computes a complex number $\mathcal{P}(\bm{R})$ that approximates the
  permanent $\Per(\bm{R})$ on average in the sense that
  \begin{equation*}
    \Pr \biggl( \abs{ 1 - \frac{\mathcal{P}(\bm{R})}{\Per(\bm{R})}}
    \le n^{-\rho} \biggr) = 1 - o(1),
  \end{equation*}
  where the probability is over the random matrix $\bm{R}$.
\end{corollary}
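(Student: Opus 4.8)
The plan is to re-run the approximation chain from the proof of \cref{thm:main}, now comparing the normalized permanent $\frac{\Per(\bm{X})}{n!}=\sum_{k=0}^{n}a_kz^k$ directly against the full limiting series $\sum_{k=0}^{\infty}V_k'z^k=e^{V_1 z-\frac{\xi z^2}{2}}$ (see \cref{eq:v-k'-summation}) rather than against a finite truncation, with the internal exponents tightened so that the additive error beats the lower bound by a polynomial factor. Concretely, given $\bm{R}$ the algorithm forms $\bm{A}=\bm{R}-\mu\bm{J}$ and $z=1/\mu$ exactly as in \cref{sec:prelim}, so that $\bm{X}=\bm{J}+z\bm{A}=z\bm{R}$ and $\bm{A}\sim\mathcal{M}_n$; it then computes $V_1=\frac{1}{n}\sum_{i,j}a_{i,j}$ (which costs only $\mathcal{O}(n^2)$ arithmetic operations) and outputs $z^{-n}\,n!\,e^{V_1 z-\frac{\xi z^2}{2}}$ as its estimate of $\Per(\bm{R})$. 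Since $\Per(\bm{R})=z^{-n}\Per(\bm{X})$, by the same scaling argument as in \cref{thm:main} it suffices to show that $e^{V_1 z-\frac{\xi z^2}{2}}$ is a multiplicative $n^{-\rho}$-approximation of $\frac{\Per(\bm{X})}{n!}$ with probability $1-o(1)$, i.e.\ that
\[
  \Bigl| 1-\frac{n!\,e^{V_1 z-\frac{\xi z^2}{2}}}{\Per(\bm{X})} \Bigr| \le n^{-\rho}
\]
with probability $1-o(1)$.

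For the analysis, first fix the internal parameters: pick $\rho'$ with $\rho<\rho'<\frac{1}{8}-c$, then $\nu\in(c+\rho',\frac{1}{8})$, then $\gamma\in(0,\nu-c-\rho')$, and finally $\beta\in(\gamma+\rho',\frac{1}{2})$. Such a choice exists precisely because $\rho'<\frac{1}{8}-c$; it satisfies all of \cref{main-parameters} (take $\theta(n)=\ln\ln n$ and instantiate $\eps:=n^{-\rho'}$, so $\cut=\ln n+\ln\frac{1}{\eps}=\Theta(\ln n)$) and, in addition, $\rho'+\gamma<\nu-c$ and $\rho'+\gamma<\beta$. Then I bound
\begin{align*}
  \Bigl| \sum_{k=0}^{n}a_kz^k-\sum_{k=0}^{\infty}V_k'z^k \Bigr|
  &\le \Bigl| \sum_{k=\cut+1}^{n}a_kz^k \Bigr|
  + \Bigl| \sum_{k=0}^{\cut}(a_k-V_k)z^k \Bigr| \\
  &\quad {}+ \Bigl| \sum_{k=0}^{\cut}(V_k-V_k')z^k \Bigr|
  + \Bigl| \sum_{k=\cut+1}^{\infty}V_k'z^k \Bigr| ,
\end{align*}
and control the four terms, respectively, by \cref{lem:a_k-tail-bound} (bound $n^{-\gamma}\eps=n^{-\gamma-\rho'}$), \cref{lem:sigma-ak-sigma-vk-difference} (bound $n^{-\beta}=o(n^{-\gamma-\rho'})$), \cref{lem:v-k-v-k'-sum} (bound $\mathcal{O}(n^{c-\nu})=\mathcal{O}(n^{-\gamma-\rho'})$), and \cref{lem:v-k-summation-tail} (bound $n^{-\omega(1)}$), each valid with probability $1-o(1)$. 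A union bound then gives $\bigl| \frac{\Per(\bm{X})}{n!}-e^{V_1 z-\frac{\xi z^2}{2}} \bigr|=\mathcal{O}(n^{-\gamma-\rho'})$ with probability $1-o(1)$.

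To finish, I invoke \cref{lem:v'-summation-big}, which gives $\bigl| e^{V_1 z-\frac{\xi z^2}{2}} \bigr|\ge n^{-\gamma}$ with probability $1-o(1)$; combined with the additive bound this forces $\bigl| \frac{\Per(\bm{X})}{n!} \bigr|\ge n^{-\gamma}\bigl(1-\mathcal{O}(n^{-\rho'})\bigr)\ge\frac{1}{2}n^{-\gamma}$ for large $n$, whence
\begin{align*}
  \Bigl| 1-\frac{n!\,e^{V_1 z-\frac{\xi z^2}{2}}}{\Per(\bm{X})} \Bigr|
  &= \frac{\bigl| \Per(\bm{X})/n!-e^{V_1 z-\frac{\xi z^2}{2}} \bigr|}{\bigl| \Per(\bm{X})/n! \bigr|}\\
  &\le \frac{\mathcal{O}(n^{-\gamma-\rho'})}{\frac{1}{2}\,n^{-\gamma}}
  = \mathcal{O}(n^{-\rho'}) \le n^{-\rho}
\end{align*}
for all sufficiently large $n$, since $\rho'>\rho$; together with the scaling argument this proves the corollary. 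The step needing the most care is the parameter bookkeeping in the middle paragraph: whereas in \cref{thm:main} each error term only had to be $o(n^{-\gamma})$, here each must be $\mathcal{O}(n^{-\gamma-\rho'})$ so that dividing by the $\Theta(n^{-\gamma})$ lower bound still leaves a genuine $\mathcal{O}(n^{-\rho'})$, and it is exactly this tightening that needs the hypothesis $\rho<\frac{1}{8}-c$ and the passage to a slightly larger exponent $\rho'$ to absorb the hidden constants. (Here $\rho$ denotes the exponent in the corollary's statement, not the third-moment quantity of \cref{def:entry-distr}, which enters only through constants hidden in the cited lemmas.)
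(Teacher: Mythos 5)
Your proposal is correct and follows essentially the same route as the paper: the paper's proof of \cref{cor:main} is exactly to output $n!\,e^{V_1 z-\frac{\xi z^2}{2}}$ (up to the $z^{-n}$ rescaling) and invoke the chain of lemmas from \cref{thm:main}, leaving the parameter bookkeeping implicit. Your explicit choice of $\rho'$, $\nu$, $\gamma$, $\beta$ with $\eps:=n^{-\rho'}$ is precisely the tightening that the hypothesis $\rho<\frac{1}{8}-c$ is there to allow, so you have simply spelled out what the paper compresses into ``by the above argument.''
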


The drawback of this simple algorithm is that we do not have a parameter $\eps$
to control the approximation precision.
However, for large $n$ this is already a very good approximation algorithm while
for small $n$ we can just compute $\Per(\bm{R})$ directly by Ryser formula in time $2^n$.
By this idea, we can convert the above algorithm into a PTAS but not FPTAS,
whose running time is polynomial in $n$ but possibly exponential in
$\frac{1}{\eps}$.
Let us fix a constant $0< \rho < \frac{1}{8}-c $.
For a given approximation parameter $\eps$, if $\eps> n^{-\rho} $ we use the
above polynomial time algorithm and otherwise simply compute it directly.
The running time is bounded by $\max \Bigl\{ \poly(n),
2^{\eps^{-\frac{1}{\rho}}} \Bigr\}$, which shows that the modified algorithm is
a PTAS.

\begin{corollary}
  \label{cor:main2}
  For any constant $c \in (0, \frac{1}{8})$, there exists a deterministic PTAS
  to approximate $\Per(\bm{R})$ for $1 - o(1)$ fraction of random matrices
  $\bm{R}$ sampled from $\mathcal{M}_{n,\, \mu}$ defined in
  \cref{def:matrix-distr} for $\abs{\mu} \ge \ln^{-c} (n)$.
\end{corollary}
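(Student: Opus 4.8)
The plan is to combine the deterministic polynomial-time algorithm of \cref{cor:main} with a brute-force fallback, selecting between the two according to the size of $\eps$ relative to $n$. Fix once and for all a constant $\rho$ with $0 < \rho < \frac{1}{8} - c$, and let $\mathcal{P}_0$ denote the algorithm furnished by \cref{cor:main}: on input $\bm{R} \sim \mathcal{M}_{n,\,\mu}$ it runs in deterministic time $\poly(n)$ and, with probability $1 - o(1)$ over $\bm{R}$, outputs a complex number with $\bigl|1 - \mathcal{P}_0(\bm{R})/\Per(\bm{R})\bigr| \le n^{-\rho}$. Note that \cref{cor:main} applies precisely because $\rho < \frac{1}{8} - c$, so this choice of $\rho$ is legitimate.

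The algorithm for \cref{cor:main2}, on input $(\bm{R}, \eps)$, first tests whether $\eps \ge n^{-\rho}$. If so, it returns $\mathcal{P}_0(\bm{R})$; since $n^{-\rho} \le \eps$, the guarantee of \cref{cor:main} immediately yields $\bigl|1 - \mathcal{P}_0(\bm{R})/\Per(\bm{R})\bigr| \le \eps$ with probability $1 - o(1)$, and the running time is $\poly(n)$. If instead $\eps < n^{-\rho}$, then $n < \eps^{-1/\rho}$, and the algorithm computes $\Per(\bm{R})$ exactly by Ryser's formula in time $\mathcal{O}(2^n \poly(n))$; the output is exact, so the approximation guarantee holds for \emph{every} $\bm{R}$, and since $n < \eps^{-1/\rho}$ this running time is bounded by $2^{\eps^{-1/\rho}}\poly(\eps^{-1/\rho})$, a quantity depending only on $\eps$. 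Combining the two cases, the total running time is at most $\max\bigl\{ \poly(n),\, 2^{\eps^{-1/\rho}}\poly(\eps^{-1/\rho}) \bigr\}$, which for each fixed $\eps$ is polynomial in $n$ (eventually dominated by the $\poly(n)$ term), so the algorithm is a PTAS.

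It remains to check the $1 - o(1)$ fraction claim, which is an asymptotic statement in $n$ with $\eps$ held fixed. For fixed $\eps$ and $n$ large, we eventually have $n^{-\rho} < \eps$, so the algorithm is in the first branch and the failure fraction is the $o(1)$ inherited from \cref{cor:main}; the brute-force branch is invoked only for bounded $n$ (bounded in terms of $\eps$) and contributes no error at all since it is exact. Hence the asymptotic fraction guarantee carries over cleanly. There is no real obstacle here — the entire analytic content lives in \cref{cor:main}, and the only point requiring a word of care is exactly this interplay between the asymptotic-in-$n$ success probability and the fact that the exact branch is used only in the $\eps$-bounded regime.
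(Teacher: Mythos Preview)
Your proposal is correct and follows essentially the same approach as the paper: fix $0<\rho<\frac{1}{8}-c$, use the polynomial-time algorithm of \cref{cor:main} when $\eps \ge n^{-\rho}$, and fall back to Ryser's exact computation when $\eps < n^{-\rho}$ (so $n < \eps^{-1/\rho}$), giving running time $\max\{\poly(n),\,2^{\eps^{-1/\rho}}\}$. Your added remark that the exact branch handles only bounded $n$ (for fixed $\eps$) and contributes no error is a helpful clarification that the paper leaves implicit.
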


\section{Estimation with Summation of Columns}

In this section, we prove that the two summations $\sum_{k = \cut + 1}^n a_k z^k$
and $\sum_{k = 0}^\cut (a_k-V_k) z^k$ are both small with high probability.
Their proofs are similar and simply follow from the fact that they have zero
mean and exponentially decaying variance.

Recall from \cref{eq:a_k,eq:V_k} that
\begin{equation*}
  a_k = \frac{1}{n^{\underline{k}}}\, \sum_{\bm{B} \subseteq_k \bm{A}}
  \Per(\bm{B}),\quad
  V_k = \dfrac{1}{n^{k/2}} \bigsum_{\{j_1, \cdots, j_k\} \in C_{n, k}} C_{j_1}
  \cdots C_{j_k}.
\end{equation*}

\begin{lemma}
  \label{lem:ak-mean-variance}
  For any $k, \ell \in \{ 0, 1, \ldots, n \}$,
  \begin{equation*}
    \Ex[a_k] = \delta_{k,0}\,,\quad
    \Ex[a_k \,\overline{a_\ell}] = \frac{\delta_{k,\ell}}{k!}.
  \end{equation*}
\end{lemma}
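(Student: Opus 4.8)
The plan is to expand each permanent into its defining sum over partial permutations and then exploit independence and the mean‑zero property of the entries of $\bm{A}$. A $k\times k$ submatrix $\bm{B}\subseteq_k\bm{A}$ is specified by a pair $(R,C)$ of $k$-subsets of $[n]$ — its rows and its columns — and $\Per(\bm{B})=\sum_{\pi}\prod_{r\in R}a_{r,\pi(r)}$, the sum running over all bijections $\pi\colon R\to C$. Hence $a_k$ is a linear combination of monomials, each a product of exactly $k$ \emph{distinct} entries of $\bm{A}$ (distinct because they sit in distinct rows and distinct columns). For the first moment this already suffices: since the entries are independent with $\Ex[a_{i,j}]=0$, every such monomial with $k\ge 1$ has expectation $0$, while for $k=0$ the unique empty submatrix contributes $\Per=1$, giving $\Ex[a_k]=\delta_{k,0}$.

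For the second moment I would write $a_k\overline{a_\ell}$ as a double sum over pairs $\bigl((R,C),\pi\bigr)$ and $\bigl((R',C'),\pi'\bigr)$ and take expectations term by term. By independence the expectation factors over the positions $(i,j)\in[n]^2$ that occur, and since within each permanent every position occurs at most once, a position can pick up at most one factor $a_{i,j}$ (from the first permanent) and at most one factor $\overline{a_{i,j}}$ (from the second). Writing $P=\{(r,\pi(r)):r\in R\}$ and $P'=\{(r',\pi'(r')):r'\in R'\}$, any position in $P\setminus P'$ forces a factor $\Ex[a_{i,j}]=0$ and any position in $P'\setminus P$ forces $\Ex[\overline{a_{i,j}}]=0$; so a surviving term requires $P=P'$. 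Because a partial permutation is recovered from its set of positions, $P=P'$ forces $k=\ell$, $R=R'$, $C=C'$ and $\pi=\pi'$ — the two index data literally coincide — and then the expectation equals $\prod_{(i,j)\in P}\Ex\abs{a_{i,j}}^2=1$, using that $\mathcal{D}=\mathcal{D}_0$ has mean $0$ and variance $1$, so $\Ex\abs{a_{i,j}}^2=1$. Note it is essential that one of the two permanents is conjugated, so that we see the second absolute moment $\Ex\abs{a_{i,j}}^2$ rather than the quasi‑variance $\Ex a_{i,j}^2=\xi$.

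It then remains to count these surviving diagonal terms: there are $\binom{n}{k}^2$ choices of submatrix $\bm{B}$ and $k!$ choices of $\pi$, each contributing $1$, so $\Ex[a_k\overline{a_k}]=\binom{n}{k}^2\,k!\,/\,(n^{\underline{k}})^2$. Substituting $n^{\underline{k}}=k!\binom{n}{k}$ collapses this to $1/k!$ (and for $k=0$ it reads $1=1/0!$), which together with the vanishing $k\ne\ell$ case gives $\Ex[a_k\overline{a_\ell}]=\delta_{k,\ell}/k!$.

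The only part requiring real care is the combinatorial bookkeeping in the middle step — precisely the claim that matching position‑sets forces the two index tuples to agree exactly, so that no spurious cross terms between distinct $(\bm{B},\pi)$ and $(\bm{B}',\pi')$ survive — together with the small but crucial observation that the conjugation converts the surviving local factors into genuine second absolute moments rather than $\xi$. Everything after that is routine substitution.
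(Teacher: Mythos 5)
Your proof is correct and follows essentially the same route as the paper: expand $a_k$ over submatrices and bijections, use independence and zero mean to kill all $k\ge 1$ terms of the first moment and all off-diagonal cross terms of the second, and then count the $\binom{n}{k}^2 k!$ surviving diagonal terms to get $1/k!$. Your extra remarks (that matching position sets force the index data to coincide, and that the conjugation yields $\Ex\abs{a_{i,j}}^2=1$ rather than the quasi-variance $\xi$) just make explicit what the paper's proof uses implicitly.
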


\begin{proof}
  As $a_0 \equiv 1$, it holds that $\Ex[a_0] = 1$. For $k > 0$,
  \begin{equation}
    \label{eq:Ex-ak}
    \begin{split}
      \Ex[a_k] & = \frac{1}{n^{\underline{k}}} \sum_{\bm{B} \subseteq_k \bm{A}}
      \Ex \!\left[\Per(\bm{B}) \right]\\
      & = \frac{1}{n^{\underline{k}}}
        \bigsum_{\substack{\{i_1, \cdots, i_k\} \in C_{n, k} \\
          \{j_1, \cdots, j_k\} \in C_{n, k}}}
        \Ex\! \left[\bigsum_{\sigma \in P_{k,k}}
        \bigprod_{t=1}^k a_{i_t, j_{\sigma_t}} \right]\\
      & = \frac{1}{n^{\underline{k}}}
        \bigsum_{\substack{\{i_1, \cdots, i_k\} \in C_{n, k} \\
          (j_1, \cdots, j_k) \in P_{n, k}}}
        \Ex\!\left[\prod_{t=1}^k a_{i_t, j_t} \right].
    \end{split}
  \end{equation}
  By the fact that the entries of $\bm{A}$ are \iid\ and have $0$ mean value, the
  above expectation is $0$. This proves $\Ex[a_k] = 0$ for $k > 0$.

  For the second part, we compute
  \begin{align*}
    \Ex[a_k \overline{a_\ell}]
    = & \frac{1}{n^{\underline{k}}\, n^{\underline{\ell}}}\,
      \sum_{\bm{B} \subseteq_k \bm{A}} \sum_{\bm{B}' \subseteq_\ell \bm{A}}
      \Ex\!\left[ \Per(\bm{B}) \overline{\Per(\bm{B}')} \right]. \\
    = & \dfrac{1}{n^{\underline{k}}\, n^{\underline{\ell}}}\,
      \bigsum_{\substack{\{i_1, \cdots, i_k\} \in C_{n, k} \\
        (j_1, \cdots, j_k) \in P_{n, k}}}
      \bigsum_{\substack{\{i_1', \cdots, i_l'\} \in C_{n, l} \\
        (j_1', \cdots, j_l') \in P_{n, l}}}
      \Ex\!\left[ \bigprod_{t = 1}^k a_{i_t, j_t}
      \bigprod_{t = 1}^k \overline{a_{i_t', j_t'}} \right] \\
  \end{align*}

  Since all entries $\bm{A}$ are iid and of zero mean, the expectation of
  two products is non-zero only if the corresponding subscripts in the two
  products are equal, i.e.
  \[
    \{(i_1, j_1), \cdots, (i_k, j_k)\} = \{(i_1', j_1'), \cdots, (i_l', j_l')\}.
  \]
  This proves that $\Ex[a_k \overline{a_\ell}]=0$ for $k \ne \ell$ and also
  simplifies $\Ex[a_k\overline{a_k}]$ as
  \begin{equation*}
    \Ex[a_k \overline{a_k}]
    = \dfrac{1}{(n^{\underline{k}})^2}\,
      \bigsum_{\substack{\{i_1, \cdots, i_k\} \in C_{n, k} \\
        (j_1, \cdots, j_k) \in P_{n, k}}}
      \Ex\!\left[ \bigprod_{t = 1}^k \abs{a_{i_t, j_t}}^2 \right]
    = \frac{1}{(n^{\underline{k}})^2}\, \binom{n}{k}^2 k! = \frac{1}{k!},
  \end{equation*}
  because entries are iid and the variance of each entry is 1.
\end{proof}

\begin{lemma}
  \label{lem:vk-mean-variance}
  For any $k \in \{0, 1, \cdots, n\}$,
  \begin{equation*}
    \Ex[V_k] = \delta_{k,0}, \quad
    \Ex\!\abs{V_k - a_k}^2 \le \dfrac{k(k-1)}{2 n \cdot k!}.
  \end{equation*}
\end{lemma}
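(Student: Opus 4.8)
The plan is to expand both $V_k$ and $a_k$ as explicit sums of squarefree monomials in the entries $a_{i,j}$ of $\bm{A}$, and then compute $\Ex\abs{V_k - a_k}^2$ from the orthonormality of these monomials under the iid, mean-zero, unit-variance assumptions on $\mathcal{D}$.

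First I would record the two monomial expansions. From the proof of \cref{lem:ak-mean-variance} we may write $a_k = \frac{1}{n^{\underline{k}}}\sum_P m_P$, where $P$ ranges over all \emph{partial permutations of size $k$} — sets of $k$ cells $(i,j)\in[n]\times[n]$ with pairwise distinct rows and pairwise distinct columns — and $m_P = \prod_{(i,j)\in P} a_{i,j}$. On the other hand, substituting $C_j = \frac{1}{\sqrt n}\sum_i a_{i,j}$ into $V_k = \frac{1}{n^{k/2}}\,e_k(C_1,\ldots,C_n)$ gives $V_k = \frac{1}{n^k}\sum_C m_C$, where $C$ ranges over all \emph{column-distinct configurations of size $k$} — sets of $k$ cells with pairwise distinct columns but arbitrary (possibly repeated) rows — each arising exactly once, from a choice of $k$ distinct columns together with one row per column. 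The crucial structural point is that partial permutations are precisely the column-distinct configurations that are also row-distinct, so $a_k$'s monomials form a sub-collection of $V_k$'s. The mean claim is then immediate: $V_0 = 1$, while for $k\ge 1$ each $m_C$ is a product of at least one independent mean-zero variable, so $\Ex[m_C]=0$ and $\Ex[V_k]=0=\delta_{k,0}$.

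For the variance bound I would write $\Ex\abs{V_k-a_k}^2 = \Ex\abs{V_k}^2 - 2\Re\Ex[V_k\overline{a_k}] + \Ex\abs{a_k}^2$ and evaluate each piece by counting, using that $\Ex[m_C\overline{m_{C'}}] = \delta_{C,C'}$ (any cell lying in exactly one of $C,C'$ contributes a lone mean-zero factor). There are $\binom nk^2 k! = (n^{\underline{k}})^2/k!$ partial permutations and $\binom nk n^k$ column-distinct configurations of size $k$, so $\Ex\abs{a_k}^2 = 1/k!$ (as already known) and $\Ex\abs{V_k}^2 = \binom nk n^k/n^{2k} = n^{\underline{k}}/(k!\,n^k)$; moreover the pairs $(C,P)$ with $C=P$ are exactly the pairs $(P,P)$, so $\Ex[V_k\overline{a_k}] = \frac{1}{n^k n^{\underline{k}}}\cdot \frac{(n^{\underline{k}})^2}{k!} = n^{\underline{k}}/(k!\,n^k)$, which is real. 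Substituting,
\[
  \Ex\abs{V_k - a_k}^2 = \frac{n^{\underline{k}}}{k!\,n^k} - \frac{2\,n^{\underline{k}}}{k!\,n^k} + \frac{1}{k!} = \frac{1}{k!}\Bigl(1 - \frac{n^{\underline{k}}}{n^k}\Bigr).
\]

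Finally I would bound $1 - n^{\underline{k}}/n^k = 1 - \prod_{j=0}^{k-1}(1 - j/n)$ by the Weierstrass inequality $\prod_j(1-x_j)\ge 1-\sum_j x_j$ for $x_j\in[0,1]$, which gives $1 - n^{\underline{k}}/n^k \le \sum_{j=0}^{k-1} j/n = k(k-1)/(2n)$, and hence $\Ex\abs{V_k-a_k}^2 \le k(k-1)/(2n\cdot k!)$, as required. The only genuine obstacle is the combinatorial bookkeeping in the third paragraph — in particular keeping straight which row/column indices are ordered versus unordered in the two expansions, so that the configuration counts, and therefore the cross term, come out exactly right; once the monomial pictures of $V_k$ and $a_k$ are correctly set up, the rest is forced. (An equivalent route is to expand the difference directly, writing $V_k - a_k = (n^{-k} - (n^{\underline{k}})^{-1})\sum_P m_P + n^{-k}\sum_{C\text{ row-repeated}} m_C$ with the two sums supported on disjoint monomials, hence orthogonal; this yields the same identity.)
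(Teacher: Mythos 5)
Your proposal is correct and matches the paper's argument in substance: both rest on the orthonormality of the squarefree monomials in the $a_{i,j}$'s, a count of partial permutations versus column-distinct configurations, the exact identity $\Ex\abs{V_k-a_k}^2=\frac{1}{k!}\bigl(1-n^{\underline{k}}/n^k\bigr)$, and the bound $1-\prod_{t=0}^{k-1}(1-t/n)\le\sum_{t=0}^{k-1}t/n$. Indeed, the ``equivalent route'' you mention at the end --- expanding $V_k-a_k$ directly into two orthogonal groups of monomials --- is precisely the computation in the paper, so the only difference is the cosmetic choice of expanding the cross term instead.
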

\begin{proof}
  By definition, $V_0 = a_0 \equiv 1$, which proves the lemma for $k = 0$.
  For positive $k$, similar to \cref{lem:ak-mean-variance}, we have
  \begin{equation}
    \label{eq:Ex-Vk}
      \Ex (V_k)
      = \frac{1}{n^{k/2}}
        \bigsum_{\{j_1, \cdots, j_k\} \in C_{n, k}}
        \Ex[C_{j_1} \cdots C_{j_k}]
      = \frac{1}{n^k}
        \bigsum_{\substack{\{j_1, \cdots, j_k\} \in C_{n, k} \\
          i_1, \cdots, i_k \in [n]}}
          \Ex\left[\prod_{\ell=1}^k a_{i_\ell, j_\ell}\right].
  \end{equation}

  By a similar argument in the proof for \cref{lem:ak-mean-variance},
  \begin{equation*}
    \begin{split}
      \Ex\!\abs{V_k - a_k}^2
      & = \Ex \left\vert
        \bigsum_{\substack{\{j_1, \cdots, j_k\} \in C_{n, k} \\
          (i_1, \cdots, i_k) \in P_{n, k}}}\,
        \prod_{t=1}^k a_{i_t,\, j_t} \left(
        \frac{1}{n^k} - \frac{1}{n^{\underline{k}}} \right)
        + \bigsum_{\substack{\{j_1, \cdots, j_k\} \in C_{n, k} \\
          (i_1, \cdots, i_k) \in [n] ^ k - P_{n, k}}}\,
          \prod_{t=1}^k a_{i_t, j_t} \cdot \frac{1}{n^k} \right\vert^2\\
      & = \bigsum_{\substack{\{j_1, \cdots, j_k\} \in C_{n, k} \\
        (i_1, \cdots, i_k) \in P_{n, k}}}\,
        \left(\frac{1}{n^k} - \frac{1}{n^{\underline{k}}}
        \right)^2 \bigprod_{t=1}^k \Ex\!\abs{a_{i_t,\, j_t}}^2
        + \bigsum_{\substack{\{j_1, \cdots, j_k\} \in C_{n, k} \\
          (i_1, \cdots, i_k) \in [n] ^ k - P_{n, k}}}\,
          \frac{1}{n^{2k}} \bigprod_{t=1}^k \Ex\!\abs{a_{i_t,\, j_t}}^2\\
      & = \left( \frac{1}{n^k} - \frac{1}{n^{\underline{k}}} \right)^2
        \binom{n}{k}\, n^{\underline{k}} + \binom{n}{k}\,
        \frac{n^k - n^{\underline{k}}}{n^{2k}}
      = \frac{n^k - n^{\underline{k}}}{k!\, n^k}.
    \end{split}
  \end{equation*}
  The bound on $\Ex\!\abs{V_k - a_k}^2$ then follows by observing
  \begin{equation*}
      \frac{n^k - n^{\underline{k}}}{k!\, n^k}
      = \frac{1}{k!} \left[ 1-\prod_{t=0}^{k-1}
        \left( 1-\frac{t}{n} \right) \right]
      \le \frac{1}{k!} \sum_{t=0}^{k-1}\frac{t}{n} = \frac{k(k-1)}{2n\cdot k!}.
  \end{equation*}
  \qedhere
\end{proof}

Then we prove \cref{eq:a_k-tail-bound,eq:sigma-ak-sigma-vk-difference} as
follows.
\begin{lemma}
  \label{lem:a_k-tail-bound}
  With parameters satisfying \cref{main-parameters}, it holds that
  \[
    \Pr \left( \abs{ \sum_{k = \cut + 1}^n a_k z^k }
      \le n^{-\gamma}\, \eps \right) \ge 1-o(1).
  \]
\end{lemma}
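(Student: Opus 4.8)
The plan is to control the second moment of the random variable $\sum_{k=\cut+1}^n a_k z^k$ and then apply Chebyshev's inequality. By \cref{lem:ak-mean-variance} we have $\Ex[a_k]=0$ for $k\ge 1$ and $\Ex[a_k\overline{a_\ell}]=\delta_{k,\ell}/k!$, so the sum has mean $0$ and all cross terms vanish, giving
\begin{equation*}
  \Ex \abs{\,\sum_{k=\cut+1}^n a_k z^k\,}^2 \;=\; \sum_{k=\cut+1}^n \frac{\abs{z}^{2k}}{k!}.
\end{equation*}
If $\cut\ge n$ this sum is empty and the lemma is trivial (the algorithm computes the permanent exactly), so I may assume $\cut<n$.

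Next I would show that this tail sum is super-polynomially small. Using $k!\ge (k/e)^k$ and $\abs{z}\le(\ln n)^c$, each summand is at most $(e\abs{z}^2/k)^k\le\bigl(e(\ln n)^{2c-1}\bigr)^k$ once $k\ge\cut+1\ge\ln n$; since $c<\frac18$ we have $2c-1<-\frac34$, so the base $e(\ln n)^{2c-1}$ tends to $0$, and for $n$ large it is at most $\frac12$, so consecutive summands decay geometrically with ratio $\le\frac12$ throughout the range $k\ge\cut+1$. Hence the whole sum is at most twice its first term, and using $e(\ln n)^{2c-1}\le(\ln n)^{-1/2}$ for $n$ large (which holds because $\frac12-2c>\frac14>0$), the first term is at most $(\ln n)^{-(\cut+1)/2}$. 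Substituting $\cut=\ln n+\ln\frac1\eps$ yields
\begin{equation*}
  \sum_{k=\cut+1}^n \frac{\abs{z}^{2k}}{k!} \;\le\; 2(\ln n)^{-\cut/2} \;=\; 2\,n^{-\frac12\ln\ln n}\,\eps^{\frac12\ln\ln n}.
\end{equation*}

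Finally, Chebyshev's inequality gives
\begin{equation*}
  \Pr\!\left(\abs{\,\sum_{k=\cut+1}^n a_k z^k\,}>n^{-\gamma}\eps\right) \;\le\; \frac{\sum_{k=\cut+1}^n \abs{z}^{2k}/k!}{n^{-2\gamma}\eps^2} \;\le\; 2\,n^{2\gamma-\frac12\ln\ln n}\,\eps^{\frac12\ln\ln n-2}.
\end{equation*}
For $n$ large enough, $\frac12\ln\ln n>\max\{2\gamma,2\}$ (recall $\gamma<\frac12$), so the exponent of $n$ is negative and the exponent of $\eps$ is positive; since $\eps\in(0,1)$ the right-hand side is $o(1)$, which is the claimed bound.

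The only real subtlety — and thus the step I would be most careful about — is keeping the estimate uniform in $\eps$, since $\eps$ is allowed to shrink with $n$. This is handled by observing that a smaller $\eps$ simultaneously enlarges $\cut$ and shrinks the target threshold $n^{-\gamma}\eps$, and that the resulting super-polynomial gain $\eps^{\,\Omega(\ln\ln n)}$ always dominates the polynomial loss $\eps^{-2}$ from Chebyshev; the degenerate case $\cut\ge n$ is disposed of separately as noted above. Everything else is a routine factorial/geometric-series estimate.
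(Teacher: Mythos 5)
Your proposal is correct and follows essentially the same route as the paper: compute the second moment of the tail via \cref{lem:ak-mean-variance} (zero mean, orthogonality giving $\sum_{k>\cut}\abs{z}^{2k}/k!$), then apply Chebyshev and show the variance decays super-polynomially because the summands fall off geometrically once $k\ge\cut+1$ with $\abs{z}\le(\ln n)^c$. Your explicit tracking of the $\eps$-dependence (the gain $\eps^{\Omega(\ln\ln n)}$ beating the $\eps^{-2}$ Chebyshev loss) is just a more quantitative rendering of the paper's remark that $\cut!/\abs{z}^{2\cut}$ is super-polynomial.
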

\begin{proof}
  To apply Chebyshev's inequality, we first calculate the variance of
  $\bigsum_{k = \cut + 1}^n a_k z^k$.
  By \cref{lem:ak-mean-variance}, we have
  \begin{equation*}
    \Ex \left[ \sum_{k = \cut + 1}^n a_k z^k \right] =
    \sum_{k = \cut + 1}^n \Ex [a_k]\, z^k = 0,
  \end{equation*}
  and
  \begin{equation*}
    \Var \! \left(\sum_{k = \cut + 1}^n a_k z^k \right)
    =  \Ex  \abs{\sum_{k = \cut + 1}^n a_k z^k}^2
    =  \sum_{k = \cut + 1}^n \sum_{\ell = \cut + 1}^n
        \Ex (a_k \overline{a_\ell})\, z^{k} \overline{z}^\ell
    =  \sum_{k = \cut + 1}^n \frac{\abs{z}^{2k}}{k!}.
  \end{equation*}
  Applying Chebyshev's inequality, we have
  \begin{equation}
    \label{eq:ak-tail-1}
    \begin{split}
      \Pr\! \left[ \abs{\sum_{k = \cut + 1}^n a_k z^k}
        \ge n^{-\gamma}\, \eps \right]
      \le \dfrac{n^{2\gamma}}{\eps^2}
      \Var\! \left( \sum_{k = \cut + 1}^n a_k z^k \right)
      \le \dfrac{n^{2\gamma}}{\eps^2}
      \sum_{k = \cut + 1}^n \dfrac{\abs{z}^{2k}}{k!}.
    \end{split}
  \end{equation}
  As chosen in \cref{main-parameters},
  \begin{equation}
    \label{eq:cut-mu-1}
    \begin{cases}
      \abs{z} \le (\ln n)^{\frac{1}{8}} \\
      \cut = \ln n + \ln \frac{1}{\eps}.
    \end{cases}
  \end{equation}
  In this case, for any $k \ge \cut$ and large $n$,
  \begin{equation*}
    \dfrac{\abs{z}^{2(k + 1)}}{(k + 1)!} \div
      \dfrac{\abs{z}^{2k}}{k!}
    = \dfrac{\abs{z}^{2}}{k + 1}
    \le \dfrac{\abs{z}^2}{\cut + 1} < 1/2.
  \end{equation*}
  We can continue \cref{eq:ak-tail-1} as
  \begin{equation}
    \begin{split}
      \Pr\! \left[ \abs{\sum_{k = \cut + 1}^n a_k z^k}
        \ge n^{-\gamma}\,\eps \right]
      \le \dfrac{n^{2\gamma} \, \abs{z}^{2\cut}}{\eps^2 \, \cut!}.
    \end{split}
  \end{equation}
  Since $\cut = \ln n + \ln \frac{1}{\eps}$, it is clear
  that $\frac{t!}{\abs{z}^{2t}}$ is super-polynomial for large $n$, which means
  \begin{equation} \label{eq:a-tail}
    \Pr\!\left[\left\vert \bigsum_{k = \cut + 1}^n a_k z^k \right\vert
      \ge n^{-\gamma}\,\eps \right]
    = o(1).
  \end{equation}
\end{proof}

\begin{lemma}
  \label{lem:sigma-ak-sigma-vk-difference}
  With all parameters satisfying \cref{main-parameters},
  \begin{equation*}
    \Pr\!\left(\abs{ \sum_{k = 0}^\cut a_k z^k - \sum_{k = 0}^\cut V_k z^k}
    \le n^{-\beta} \right) = 1 - o(1).
  \end{equation*}
\end{lemma}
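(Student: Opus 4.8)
The plan is to control the difference $\sum_{k=0}^{\cut}(a_k-V_k)z^k$ via a second-moment argument, exactly paralleling the proof of \cref{lem:a_k-tail-bound}. First I would observe that by \cref{lem:vk-mean-variance} each $a_k-V_k$ has mean zero, so the whole sum has mean zero and Chebyshev's inequality reduces everything to bounding the variance $\Var\bigl(\sum_{k=0}^{\cut}(a_k-V_k)z^k\bigr)=\Ex\bigl|\sum_{k=0}^{\cut}(a_k-V_k)z^k\bigr|^2$. To handle the cross terms $\Ex\bigl[(a_k-V_k)\overline{(a_\ell-V_\ell)}\bigr]$ I would use the same combinatorial observation already exploited in \cref{lem:ak-mean-variance} and \cref{lem:vk-mean-variance}: writing each of $a_k,V_k$ as a polynomial in the \iid, mean-zero entries $a_{i,j}$, a product of monomials has nonzero expectation only when the two index multisets coincide, which forces $k=\ell$. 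Hence the variance is diagonal, $\Var=\sum_{k=0}^{\cut}|z|^{2k}\,\Ex|V_k-a_k|^2$, and I can plug in the bound $\Ex|V_k-a_k|^2\le \frac{k(k-1)}{2n\cdot k!}$ from \cref{lem:vk-mean-variance}.

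Next I would sum the resulting series. We get
\[
  \Var\!\left(\sum_{k=0}^{\cut}(a_k-V_k)z^k\right)\;\le\;\frac{1}{2n}\sum_{k=2}^{\cut}\frac{k(k-1)\,|z|^{2k}}{k!}\;=\;\frac{1}{2n}\sum_{k=2}^{\cut}\frac{|z|^{2k}}{(k-2)!}\;\le\;\frac{|z|^{4}}{2n}\sum_{j=0}^{\infty}\frac{|z|^{2j}}{j!}\;=\;\frac{|z|^{4}}{2n}\,e^{|z|^{2}}.
\]
With $|z|\le(\ln n)^{c}\le(\ln n)^{1/8}$ from \cref{main-parameters}, the factor $|z|^4 e^{|z|^2}$ is only quasi-polynomially large in a very mild sense — in fact $e^{|z|^2}\le e^{\sqrt{\ln n}}=n^{o(1)}$ — so the variance is at most $n^{-1+o(1)}$. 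Then Chebyshev gives
\[
  \Pr\!\left(\abs{\sum_{k=0}^{\cut}(a_k-V_k)z^k}\ge n^{-\beta}\right)\;\le\;\frac{n^{2\beta}}{2n}\,|z|^{4}e^{|z|^{2}}\;=\;n^{2\beta-1+o(1)},
\]
and since $\beta<\tfrac12$ by \cref{main-parameters}, the exponent $2\beta-1$ is strictly negative, so this probability is $o(1)$. That finishes the lemma.

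I do not expect a genuine obstacle here; the only point requiring a little care is justifying that the covariance matrix of the family $\{a_k-V_k\}_k$ is diagonal, i.e. that $\Ex[(a_k-V_k)\overline{(a_\ell-V_\ell)}]=0$ for $k\neq\ell$. This is not literally one of the stated lemmas, but it follows from the identical index-matching argument already written out twice in Section~4: expand $a_k-V_k$ as a linear combination of degree-$k$ monomials in the $a_{i,j}$, expand $\overline{a_\ell-V_\ell}$ as degree-$\ell$ monomials in the $\overline{a_{i,j}}$, and note that by independence and zero mean the expectation of a product vanishes unless every variable appears to a matched power on both sides, which is impossible when $k\neq\ell$. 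Everything else is the routine geometric-series / Stirling estimate already rehearsed in \cref{lem:a_k-tail-bound}, and the slack between $\beta<\tfrac12$ and the $n^{-1}$ from the variance gives plenty of room even after the $n^{o(1)}$ loss from $e^{|z|^2}$.
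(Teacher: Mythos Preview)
Your proposal is correct and essentially identical to the paper's own proof: both use Chebyshev after diagonalizing the variance via the index-matching argument, plug in the bound $\Ex|V_k-a_k|^2\le\frac{k(k-1)}{2n\cdot k!}$ from \cref{lem:vk-mean-variance}, sum to $\frac{|z|^4}{2n}e^{|z|^2}$, and conclude from $\beta<\tfrac12$ and $|z|\le(\ln n)^{1/8}$. The one point you flag as needing care---vanishing of the cross-covariances---is exactly what the paper also handwaves as ``an analogue of \cref{lem:ak-mean-variance}'', and your justification (each monomial in $a_k-V_k$ is a product of $k$ distinct entries since the column indices are distinct, so index sets of different sizes cannot match) is the right one.
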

\begin{proof}
  By \cref{lem:ak-mean-variance,lem:vk-mean-variance}, we have $\Ex[V_k] =
  \Ex[a_k] = \delta_{k, 0}$, which in turn implies that
  \begin{equation*}
    \Var \! \left(\bigsum_{k=0}^\cut a_k z^k - \bigsum _{k=0}^\cut V_kz^k\right)
    = \Ex\! \abs{ \bigsum_{k=0}^\cut a_k z^k - \bigsum _{k=0}^\cut V_k z^k }^2.
  \end{equation*}
  Then, similar to \cref{lem:a_k-tail-bound}, we have
  \begin{align*}
    \Var \!\left( \sum_{k=0}^\cut a_k z^k - \sum _{k=0}^\cut V_k z^k \right)
     & = \sum_{k=0}^\cut \Ex \! \abs{ a_k-V_k }^2 \abs{z}^{2k}
     \le \sum_{k=0}^\cut \dfrac{k(k-1)}{2n \cdot k!} \abs{z}^{2k} \\
     & = \dfrac{\abs{z}^4}{2n} \sum_{k=0}^{\cut-2} \dfrac{\abs{z}^{2k}}{k!}
     \le \dfrac{\abs{z}^4 e^{\abs{z}^2}}{2n},
  \end{align*}
  where the first step is an analogue of \cref{lem:ak-mean-variance} useing the
  independence and zero mean value property of entries in $\bm{A}$
  and the last comes from the taylor expansion of $e^{\abs{z}^2}$.

  From Chebyshev's inequality, it follows that
  \begin{equation*}
    \Pr \left(\abs{\sum_{k = 0}^t a_k z^k - \sum_{k = 0}^\cut V_k z^k}
      \ge n^{-\beta} \right)
    \le \dfrac{\abs{z}^4 e^{\abs{z}^2}}{2\,n^{1 - 2\beta}}.
  \end{equation*}
  Since $\beta < \frac{1}{2}$ and $\abs{z} \le (\ln n)^{\frac{1}{8}}$, this
  probability is upper-bounded by $o(1)$.
\end{proof}

\section{Upper Bounds of the Power-Sum of Columns}

In this section, we establish the recursion for $V_k$'s and
a concentration bound of $D_k$. This uses the elementary symmetric
polynomial and moment inequalities.

\subsection{Newton's Identies in Terms of Symmetric Polynomials}

The elementary symmetric polynomials and power sums, as defined in
\cref{def:s-e-poly}, follow the so-called Newton's identities.
We give the following elementary derivation for reader's convenience.
\begin{lemma}
  \label{lem:product-sum-formula}
  Given variables $x_1, x_2, \ldots, x_n$ and any $m \in [n],$ we have
  \begin{equation*}
    e_m(n) = \dfrac{1}{m}\bigsum_{k=0}^{m-1}(-1)^k e_{m-k-1}(n)
    S_{k+1}(n).
  \end{equation*}
\end{lemma}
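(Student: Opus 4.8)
The plan is to prove the equivalent rearranged identity $m\,e_m(n)=\sum_{k=0}^{m-1}(-1)^k e_{m-k-1}(n)\,S_{k+1}(n)$ by comparing, monomial by monomial, the two sides as polynomials in $x_1,\dots,x_n$. First I would expand the right-hand side directly from the definitions in \cref{def:s-e-poly}: $e_{m-k-1}(n)\,S_{k+1}(n)=\sum_{\{i_1,\dots,i_{m-k-1}\}\in C_{n,m-k-1}}\sum_{j\in[n]}x_j^{\,k+1}x_{i_1}\cdots x_{i_{m-k-1}}$. Every monomial produced this way is homogeneous of degree $m$, and, crucially, has \emph{at most one} variable of exponent $\ge 2$: the variable $x_j$ carries exponent $k+1$ if $j$ is outside the chosen $(m-k-1)$-set and exponent $k+2$ if it lies inside it, while all other variables carry exponent exactly $1$. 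Since $e_m(n)$ is also homogeneous of degree $m$ and is a sum of squarefree monomials with support of size $m$, it suffices to check the coefficient of two types of monomial: (a) a squarefree monomial $\mu=x_{a_1}\cdots x_{a_m}$ with $m$ distinct variables, and (b) a monomial $\mu$ with one distinguished variable $x_j$ of exponent $e\ge 2$ together with a set $S$ of $m-e$ further distinct variables ($j\notin S$) of exponent $1$; all remaining degree-$m$ monomials have coefficient $0$ on both sides.

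Next I would identify exactly which terms of the expanded sum contribute to each type. For a type-(b) monomial $\mu$, a term of index $k$ produces $\mu$ precisely when either $j$ is outside the $(m-k-1)$-set and $k+1=e$ (forcing $k=e-1$ and the set to be $S$), or $j$ is inside it and $k+2=e$ (forcing $k=e-2$ and the set to be $S\cup\{j\}$); in both cases the cardinality constraint $|T|=m-k-1$ is satisfied. These two contributions carry signs $(-1)^{e-1}$ and $(-1)^{e-2}=(-1)^e$, so they cancel, which matches the coefficient $0$ of $\mu$ in $m\,e_m(n)$. For a type-(a) monomial $\mu$, squarefreeness forces $k=0$ (any $k\ge 1$ gives a repeated variable, as does $j\in T$ when $k=0$), and then the $(m-1)$-set must be $\{a_1,\dots,a_m\}\setminus\{j\}$ with $j$ ranging over the $m$ elements $a_1,\dots,a_m$; each such choice has sign $(-1)^0=1$, so the coefficient is $m$, equal to the coefficient of $\mu$ in $m\,e_m(n)$. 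Dividing through by $m$ yields the lemma.

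The main thing requiring care is the bookkeeping of the case split: keeping the sub-cases $j\in T$ versus $j\notin T$ straight, verifying in each that $|T|=m-k-1$ holds, confirming that types (a) and (b) do not overlap (a genuine type-(b) monomial has $e\ge 2$), and confirming that monomials with two or more variables of exponent $\ge 2$ simply never occur on the right and have coefficient $0$ on the left. As an alternative, I could give the generating-function proof: setting $E(t)=\prod_{i=1}^n(1+x_i t)=\sum_{k\ge 0}e_k(n)t^k$ one has $E'(t)/E(t)=\sum_{i=1}^n x_i/(1+x_i t)=\sum_{j\ge 0}(-1)^j S_{j+1}(n)t^j$, hence $E'(t)=E(t)\cdot\sum_{j\ge 0}(-1)^j S_{j+1}(n)t^j$, and equating the coefficient of $t^{m-1}$ gives the identity at once. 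I would nonetheless prefer to present the combinatorial argument above, since it is fully self-contained and avoids any manipulation of formal power series.
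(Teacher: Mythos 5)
Your proof is correct: every monomial on the right-hand side is homogeneous of degree $m$ with at most one exponent exceeding one, the two ways of producing a monomial containing $x_j^e$ with $e\ge 2$ (namely $k=e-1$ with $j$ outside the chosen set and $k=e-2$ with $j$ inside it) have the required cardinalities and carry opposite signs, so they cancel, and the squarefree monomials arise only from $k=0$ with multiplicity $m$, matching $m\,e_m(n)$. The paper reaches the same identity by a different packaging of the same double count: it introduces auxiliary sums $Q_{m,k}=\sum_{\{j_1,\dots,j_m\}\in C_{n,m}}x_{j_1}\cdots x_{j_m}\sum_{i=1}^m x_{j_i}^{k-1}$, proves the two-term relation $Q_{m,k}+Q_{m-1,k+1}=e_{m-1}(n)\,S_k(n)$ by splitting according to whether the distinguished index lies in the chosen subset (exactly your case split $j\in T$ versus $j\notin T$), and then telescopes the alternating sum down to $Q_{m,1}=m\,e_m(n)$. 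Your version performs the cancellation locally, coefficient by coefficient, which is more explicit but needs the careful monomial bookkeeping you describe; the paper's recursion absorbs that bookkeeping into the definition of $Q_{m,k}$ and finishes in one telescoping step. Your generating-function alternative via $E'(t)/E(t)=\sum_{j\ge 0}(-1)^j S_{j+1}(n)\,t^j$ is a genuinely different and shorter route, standard for Newton's identities, and is equally valid here since the claim is a polynomial identity in finitely many variables.
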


\begin{proof}
  Let us introduce auxiliary variables $Q_{m,k}$ defined by
  \begin{align*}
    Q_{m,k} \triangleq \bigsum_{\{j_1, \cdots, j_m\} \in C_{n, m}}
    x_{j_1}x_{j_2}\cdots x_{j_m}\bigsum_{i=1}^m x_{j_i} ^ {k-1}
  \end{align*}
  for any $0 \le m \le n, k \ge 1$ with the convention $Q_{0, k} \equiv 0$ for
  $k \ge 1.$ By definition, $Q_{m,1} = m e_m(n) , Q_{1,k} = S_k$.

  Then we consider a counting problem: choose a $(m-1)$-subset $A$ of $[n]$
  together with an $i \in [n]$, and the contribution of this choice is $x_i^k
  \prod_{j\in A} x_j$.
  On the other hand, we can partition all choices by the criterion whether $k\in
  A$.
  Thus
  \[
    Q_{m,k} + Q_{m - 1, k + 1} = e_{m - 1}(n) S_k
  \]
  holds for all $m \ge 1$.
  Solving $Q_{m,1}$, the lemma immediately follows.
\end{proof}

\subsection{A Third Moment Inequality}

By some calculation, we can derive the following upper-bound of the absolute
third moment of a sequence of \iid\ complex rv's.
\begin{lemma}
  \label{lem:sum-third-moment}
  Suppose $X_1, X_2, \cdots, X_n$ is a sequence of \iid\ random variables
  following distribution $\mathcal{D}$,
  then there exists an absolute constant $\eta > 0$ such that
  \begin{equation*}
    \Ex \abs{ \frac{\sum_{i=1}^n X_i}{\sqrt{n}} }^3 \le
    \eta \left(1 + \frac{\rho}{\sqrt{n}} \right).
  \end{equation*}
\end{lemma}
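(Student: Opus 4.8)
The plan is to bound $\Ex\bigl|\sum_{i=1}^n X_i\bigr|^3$ directly and then divide by $n^{3/2}$. Write $T = \sum_{i=1}^n X_i$, a sum of \iid\ complex random variables with $\Ex X_i = 0$, $\Ex|X_i|^2 = 1$, and $\Ex|X_i|^3 = \rho$. The natural first step is to relate the third absolute moment to the second moment plus a correction coming from the third moments of the summands. A clean way to do this is via the inequality $|t|^3 \le |t|^2 \cdot |t|$ combined with a bound on $\Ex|T|$ and $\Ex|T|^4$, or more directly to use the known fact that for sums of independent mean-zero complex (equivalently, $\R^2$-valued) random variables one has a Marcinkiewicz--Zygmund / Rosenthal type estimate $\Ex|T|^3 \le B_3\bigl((\Ex|T|^2)^{3/2} + \sum_{i=1}^n \Ex|X_i|^3\bigr)$ for an absolute constant $B_3$. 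Since the $X_i$ are \iid, $\Ex|T|^2 = n$ and $\sum_{i=1}^n \Ex|X_i|^3 = n\rho$, so $\Ex|T|^3 \le B_3(n^{3/2} + n\rho)$, and dividing by $n^{3/2}$ gives $\Ex|T/\sqrt n|^3 \le B_3(1 + \rho/\sqrt n)$, which is exactly the claim with $\eta = B_3$.

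If one prefers an entirely self-contained derivation rather than citing Rosenthal, the concrete route is: first use Jensen or $\|{\cdot}\|_3 \le \|{\cdot}\|_4$ to pass to the fourth moment, compute $\Ex|T|^4$ by expanding, or — cleaner — use symmetrization and the contraction principle. I would instead go through the symmetrization step: let $X_i'$ be an independent copy and $\varepsilon_i$ Rademacher; then $\Ex|T|^3 \le 2^3 \Ex\bigl|\sum_i \varepsilon_i X_i\bigr|^3$ up to adding back the mean (which is zero here), and for the Rademacher sum one applies the real Khintchine inequality coordinatewise (real and imaginary parts separately), getting $\Ex\bigl|\sum_i\varepsilon_i X_i\bigr|^3 \lesssim \Ex\bigl(\sum_i |X_i|^2\bigr)^{3/2}$. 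Then bound $\bigl(\sum_i|X_i|^2\bigr)^{3/2}$ by splitting: $\bigl(\sum_i|X_i|^2\bigr)^{3/2} \le \bigl(\sum_i|X_i|^2\bigr)^{1/2}\sum_i|X_i|^2$, take expectations, use Cauchy--Schwarz and independence to separate, and the worst term produces either $n^{3/2}$ (from the "diagonal-free" part, using $\Ex|X_i|^2 = 1$) or $n\cdot\Ex|X_i|^3$ (from the diagonal part, using $\Ex|X_i|^3 = \rho$). Collecting the constants yields the desired $\eta(1 + \rho/\sqrt n)$.

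The main obstacle is bookkeeping the absolute constant cleanly while handling the complex (two-dimensional) nature of the variables: Khintchine-type and Rosenthal-type inequalities are usually quoted for real scalars, so one must either invoke a vector-valued version or reduce to real and imaginary parts and absorb a further factor of $2$ or $\sqrt 2$. None of this is deep — it is a routine moment computation — so I would present the Rosenthal inequality route as the headline argument (one line once the inequality is invoked) and, if the referee wants self-containedness, include the symmetrization/Khintchine expansion as the honest proof. I do not expect to need the finiteness hypothesis $\rho < \infty$ for anything beyond making the right-hand side finite, and the $1$ in $(1 + \rho/\sqrt n)$ is exactly the central-limit-scale contribution while $\rho/\sqrt n$ is the lower-order Berry--Esseen-type correction — a structure entirely consistent with how this lemma will be used downstream together with the normal approximation.
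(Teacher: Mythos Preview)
Your headline argument (Route 1, citing Rosenthal's inequality for $p=3$) is correct: for independent mean-zero random variables in a Hilbert space one has $\Ex|T|^3 \le B_3\bigl((\Ex|T|^2)^{3/2} + \sum_i \Ex|X_i|^3\bigr)$ with an absolute constant $B_3$, and plugging in $\Ex|T|^2 = n$ and $\sum_i \Ex|X_i|^3 = n\rho$ gives the lemma immediately. This is a genuinely different route from the paper's proof. The paper does not cite Rosenthal at all; instead it gives an elementary, fully self-contained argument: it splits $X_i$ into real and imaginary parts, writes $R_m = \sum_{j\le m}\Re X_j$ and $T_m = \sum_{j\le m}\Im X_j$, and proves $\Ex|R_n|^3 \le C'(n\rho_1 + n^{3/2}\sigma_1^3)$ by a dyadic doubling recursion ($\Ex|R_{2k}|^3 \le 2\Ex|R_k|^3 + 6k^{3/2}\sigma_1^3$, and a companion step from $2k$ to $2k+1$), then handles the cross terms $\Ex|R_n|^2|T_n|$ and $\Ex|R_n||T_n|^2$ by the same doubling trick before recombining via $|R_n + iT_n|^3 \le (|R_n|+|T_n|)^3$. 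In effect the paper is reproving the $p=3$ Rosenthal bound from scratch for this particular situation. Your approach is far shorter and conceptually cleaner if one is willing to import the inequality; the paper's approach buys complete self-containment at the cost of about a page of computation.

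One caution about your fallback Route 2. The symmetrization and Khintchine steps are fine, but your sketch for bounding $\Ex\bigl(\sum_i |X_i|^2\bigr)^{3/2}$ by writing it as $\bigl(\sum_i |X_i|^2\bigr)^{1/2}\sum_i |X_i|^2$ and applying Cauchy--Schwarz does not close: Cauchy--Schwarz on that product produces $\sqrt{\Ex\sum|X_i|^2}\cdot\sqrt{\Ex(\sum|X_i|^2)^2}$, and the second factor involves $\Ex|X_i|^4$, which you have no hypothesis on. To make Route 2 self-contained you would need a different estimate at this step (e.g.\ the dyadic doubling the paper uses, or one of the standard fourth-moment-free proofs of Rosenthal). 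Since Route 1 already does the job, this is only a remark about the backup plan.
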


\begin{proof}
  Let
  \begin{equation*}
    \sigma_1 = \sqrt{\Ex_{X \sim \mathcal{D}}[\Re(X)^2]}, \quad
    \sigma_2 = \sqrt{\Ex_{X \sim \mathcal{D}}[\Im(X)^2]},
  \end{equation*}
  and
  \begin{equation*}
    x_i \triangleq \Re(X_{i}),\quad
    y_i \triangleq \Im(X_{i}),\quad
    \rho_1 \triangleq \Ex_{X \sim \mathcal{D}}\abs{\Re(X)}^3,\quad
    \rho_2 \triangleq \Ex_{X \sim \mathcal{D}}\abs{\Im(X)}^3.
  \end{equation*}
  Since $\rho < \infty$, $\rho_1$ and $\rho_2$ exists.
  Define $R_m \triangleq \sum_{j=1}^m x_j$ and $T_m \triangleq \sum_{j=1}^m
  y_j$ for $m \in [n].$

  For all integer $k$, we have
  \begin{equation*}
    \begin{split}
      \Ex \abs{R_{2k}}^3
      & = \Ex \abs{R_k + (R_{2k} - R_k)}^3\\
      & \le \Ex \left[ \abs{R_k} + \abs{R_{2k} - R_k} \right]^3\\
      & = \Ex \abs{R_k}^3 + \Ex \abs{R_{2k} - R_k}^3 + 3 \Ex \abs{R_k}^2
      \abs{R_{2k}-R_k} + 3 \Ex \abs{R_k} \abs{R_{2k} - R_k}^2\\
      & = 2 \Ex \abs{R_k}^3 + 6 \Ex R_k^2 \Ex \abs{R_k}\\
      & \le 2 \Ex \abs{R_k}^3 + 6 \Ex R_k^2 \sqrt{\Ex R_k^2}\\
      & = 2 \Ex \abs{R_k}^3 + 6 k \sigma_1^2 \cdot \sqrt{k} \sigma_1.
    \end{split}
  \end{equation*}
  Similarly,
  \begin{equation*}
    \begin{split}
      \Ex \abs{R_{2k+1}}^3
      & \le \Ex \left[ \abs{R_{2k}} + \abs{x_{2k+1}} \right]^3\\
      & \le \Ex \abs{R_{2k}}^3 + \Ex \abs{x_{2k+1}}^3 + 3 \Ex
      \abs{R_{2k}}^2 \abs{x_{2k+1}} + 3 \Ex \abs{R_{2k}} \abs{x_{2k+1}}^2\\
      & \le \Ex \abs{R_{2k}}^3 + \Ex \abs{x_{2k+1}}^3 + 3 \Ex
        R_{2k}^2 \sqrt{\Ex x_{2k+1}^2} + 3 \sqrt{\Ex
        R_{2k}^2} \Ex x_{2k+1}^2\\
      & = \Ex \abs{R_{2k}}^3 + \rho_1 + 6k\sigma_1^3 +
      3\sqrt{2k}\sigma_1^3.
    \end{split}
  \end{equation*}

  Applying induction using the above two rules, we have
  \begin{equation*}
    \begin{split}
      \Ex \abs{R_n}^3
      \le n \rho_1 + \sigma_1^3 \sum_{i \ge 1} \left[6 \left( \frac{n}{2^i}
      \right)^{3/2} + \frac{6 n}{2^i} + 3 \sqrt{2} \sqrt{
        \frac{n}{2^i} } \,\right]
      \le C' \left( n \rho_1 + n^{3/2} \sigma_1^3 \right)
    \end{split}
  \end{equation*}
  for some constant $C'$.
  A similar reasoning for the imaginary part gives
  \begin{equation*}
    \Ex \abs{T_n}^3 \le C' \left( n \rho_2 + n^{3/2} \sigma_2^3 \right).
  \end{equation*}

  For $0 \le k \le n$, we have
  \begin{equation*}
    \begin{split}
      \Ex \abs{R_n}^2 \abs{T_n}
      = & \Ex \abs{R_k + (R_n - R_k)}^2 \abs{T_k + (T_n - T_k)}\\
      \le & \Ex \Bigl[ \abs{R_k}^2 \abs{T_k} + \abs{R_k}^2 \abs{T_n - T_k} +
        2\abs{R_k} \abs{R_n - R_k} \abs{T_k}\\
        & \quad + 2\abs{R_k} \abs{R_n - R_k} \abs{T_n - T_k} + \abs{R_n - R_k}^2
        \abs{T_k} + \abs{R_n - R_k}^2 \abs{T_n - T_k} \Bigr]\\
      \le & \Ex R_k^2 \abs{T_k} + \Ex R_k^2 \sqrt{\Ex T_{n-k}^2} +
        2 \sqrt{\Ex R_k^2\, \Ex T_k^2} \sqrt{\Ex R_{n-k}^2}\\
      & \quad + 2 \sqrt{\Ex R_k^2}\sqrt{\Ex R_{n-k}^2\, \Ex T_{n-k}^2}
        + \Ex R_{n-k}^2 \sqrt{\Ex T_k^2}
        + \Ex R_{n-k}^2 \abs{T_{n-k}}\\
      = & \Ex\!\left[ \abs{R_k}^2 \abs{T_k} + \abs{R_{n-k}}^2 \abs{T_{n-k}}
        \right] + 3 \sigma_1^2 \sigma_2 \sqrt{k(n-k)}(\sqrt{k} + \sqrt{n-k}).
    \end{split}
  \end{equation*}
  This establishes that
  \begin{equation*}
    \Ex \abs{R_n}^2\abs{T_n} \le n \Ex \abs{x_1^2 y_1} + C'' n^{3/2} \sigma_1^2
    \sigma_2,
  \end{equation*}
  for some constant $C''$.
  Symmetrically, we have
  \begin{equation*}
    \Ex \abs{R_n}\abs{T_n}^2 \le n \Ex \abs{x_1 y_1^2} + C'' n^{3/2} \sigma_1
    \sigma_2^2.
  \end{equation*}
  Therefore,
  \begin{equation*}
    \begin{split}
      \Ex \abs{\frac{\sum_{j=1}^n X_j}{\sqrt{n}}}^3 =
      & n^{-\frac{3}{2}} \Ex \abs{R_n + i T_n}^3\\
      \le & n^{-\frac{3}{2}} \Ex \bigl[ \abs{R_n}^3 + \abs{T_n}^3 + 3
        \abs{R_n}^2\abs{T_n} + 3 \abs{R_n}\abs{T_n}^2 \bigr]\\
      \le & n^{-\frac{3}{2}} \Bigl[ C' n(\rho_1 + \rho_2) + C' n^{\frac{3}{2}}
        \bigl( \sigma_1^3 + \sigma_2^3 \bigr) + 3 C'' n^{\frac{3}{2}}
        \sigma_1\sigma_2 (\sigma_1 + \sigma_2)\\
      & \quad + 3n \Ex \bigl( \abs{x_1}^2\abs{y_1} + \abs{x_1}\abs{y_1}^2 \bigr)
        \Bigr]
    \end{split}
  \end{equation*}
  Using the basic inequality
  \begin{equation*}
    x^2 y + x y^2 \le x^3 + y^3, \quad \forall x, y > 0,
  \end{equation*}
  and the fact that
  \begin{equation*}
    \begin{cases}
      \rho_1, \rho_2 \le \rho, \\
      \sigma_1, \sigma_2 \le 1,
    \end{cases}
  \end{equation*}
  we have for some constant $\eta > 0$ that
  \begin{equation*}
    \begin{split}
      \Ex \abs{\frac{\sum_{j=1}^n X_j}{\sqrt{n}}}^3
      & \le n^{-\frac{3}{2}} \Bigl[ 2 C' n \rho + 2 C' n^{\frac{3}{2}} + 6 C''
      n^{\frac{3}{2}} + 3n \Ex \bigl( \abs{x_1}^3 + \abs{y_1}^3 \bigr) \Bigr] \\
      & \le \eta (1 + \frac{\rho}{\sqrt{n}}).
    \end{split}
  \end{equation*}
\end{proof}

\subsection{Bounds for $D_k$}

\begin{lemma} \label{lem:bound-d2}
  For any $0 < \phi < \frac{1}{2}$,
  \[
     \Pr\!\left(\vert D_2 - \xi \vert \le n^{-\phi} \right) = 1 - o(1) .
  \]
\end{lemma}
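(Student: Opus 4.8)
The plan is to expand $D_2$ directly in the entries of $\bm A$ and run a second‑moment argument, splitting into a ``diagonal'' and an ``off‑diagonal'' part. Write $a_{i,j}$ for the entries of $\bm A$ (iid from $\mathcal D=\mathcal D_0$, so $\Ex[a_{i,j}]=0$, $\Ex\abs{a_{i,j}}^2=1$, and $\Ex[a_{i,j}^2]=\xi$). From \cref{eq:column-sum,eq:D_k},
\[
  D_2=\frac1n\sum_{j=1}^n C_j^2=\frac1{n^2}\sum_{j=1}^n\Bigl(\sum_{i=1}^n a_{i,j}\Bigr)^2
  =\underbrace{\frac1{n^2}\sum_{i,j}a_{i,j}^2}_{=:P}\ +\ \underbrace{\frac1{n^2}\sum_{j=1}^n\ \sum_{\substack{i,i'=1\\ i\ne i'}}^n a_{i,j}a_{i',j}}_{=:Q}.
\]
Since $\Ex[P]=\xi$ and $\Ex[Q]=0$, it suffices to show that $\abs{P-\xi}\le\frac12 n^{-\phi}$ and $\abs{Q}\le\frac12 n^{-\phi}$ each hold with probability $1-o(1)$ and union‑bound.

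The off‑diagonal term $Q$ is the easy case: it has a finite variance depending only on $\Ex\abs{a_{i,j}}^2=1$. Indeed $Q=\frac1{n^2}\sum_j W_j$ with $W_j=(\sum_i a_{i,j})^2-\sum_i a_{i,j}^2$; the $W_j$ are independent across columns with mean $0$, and a direct index count shows $\Ex\abs{W_j}^2=2n(n-1)$ — the only quadruples of indices that survive are the two ``pairing'' patterns, each contributing $\Ex\abs{a}^2\Ex\abs{a}^2=1$, and crucially no $\Ex[a^2]=\xi$ terms appear because those would need two factors with the same index, excluded by $i\ne i'$. Hence $\Ex\abs{Q}^2=\frac1{n^4}\sum_j\Ex\abs{W_j}^2\le 2/n$, and Chebyshev gives $\Pr(\abs{Q}\ge\frac12 n^{-\phi})=O(n^{2\phi-1})=o(1)$ since $\phi<\frac12$.

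The diagonal term $P$ is the one place where only a third moment is available ($\mathcal D$ is not assumed to have a fourth moment), so a plain second‑moment bound on $P-\xi$ is not directly available. I would handle it by truncation: fix $\theta=n^{2/3+\delta}$ for a small $\delta\in(0,\frac13)$, set $\hat a_{i,j}=a_{i,j}\mathbbm 1[\abs{a_{i,j}}\le\theta]$, and note $\Pr(\exists(i,j):\abs{a_{i,j}}>\theta)\le n^2\rho\,\theta^{-3}=o(1)$, so we may work on the event $a_{i,j}=\hat a_{i,j}$ for all $i,j$. There $\abs{\Ex[\hat a_{i,j}^2]-\xi}\le\Ex[\abs{a_{i,j}}^2\mathbbm 1[\abs{a_{i,j}}>\theta]]\le\rho/\theta=o(n^{-\phi})$ and $\Var(\hat a_{i,j}^2)\le\Ex[\abs{a_{i,j}}^4\mathbbm 1[\abs{a_{i,j}}\le\theta]]\le\theta\rho$, so $\Ex\bigl|\frac1{n^2}\sum_{i,j}(\hat a_{i,j}^2-\Ex[\hat a_{i,j}^2])\bigr|^2\le\theta\rho/n^2=O(n^{\delta-4/3})$; Chebyshev together with the mean‑shift estimate then yields $\Pr(\abs{P-\xi}\ge\frac12 n^{-\phi})=o(1)$ for all $\phi<\frac23$, in particular for $\phi<\frac12$. (Alternatively, since $a_{i,j}^2-\xi$ has bounded absolute $\frac32$‑moment by the triangle inequality and $\Ex\abs{a_{i,j}}^3=\rho$, one may instead invoke the von Bahr–Esseen inequality on the $n^2$ iid summands and conclude via Markov.)

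The only mildly delicate point is bookkeeping the exponents so that the truncation‑failure probability and the two Chebyshev tails are simultaneously $o(1)$ for every $\phi\in(0,\frac12)$; the index count giving $\Ex\abs{W_j}^2=2n(n-1)$ is routine but should be done carefully to confirm that the $\xi$‑terms genuinely drop out.
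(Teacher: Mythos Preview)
Your argument is correct and, in fact, somewhat cleaner than the paper's. Both proofs rely on the same two ingredients---truncation (to manufacture a fourth-moment bound out of the third-moment hypothesis) and Chebyshev---but organize the computation differently.

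The paper truncates all entries at level $n$, defines $S_j=\sum_i X_{i,j}$, and then bounds $\Ex\abs{\sum_j(S_j^2-n\xi)}^2$ directly by expanding $\Ex[S_1^2\overline{S}_1^2]$ into the five possible index-collision patterns; this requires tracking several auxiliary moments $\mu_1,\mu_2,\mu_2^*,\mu_4^*,\mu^\dag$ of the truncated variable and produces the final bound $O(\rho/(n\eps^2))$. Your diagonal/off-diagonal split $D_2=P+Q$ isolates the only place where a fourth moment is genuinely needed: the off-diagonal $Q$ has $\Ex\abs{Q}^2\le 2/n$ using nothing beyond $\Ex\abs{a}^2=1$ (your observation that no $\xi$ terms survive is exactly right, since each pair $a_{i,j}a_{i',j}$ with $i\ne i'$ forces any nonvanishing pairing to be of the form $a\bar a$), so truncation is confined to the iid average $P$, where the bookkeeping is trivial. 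Your truncation level $n^{2/3+\delta}$ versus the paper's $n$ is immaterial; both give more than enough room for $\phi<\tfrac12$.

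What your decomposition buys is modularity and a shorter moment computation; what the paper's single-block approach buys is a slightly more explicit final constant. Either route establishes the lemma.
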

\begin{proof}
  Let
  \begin{equation*}
    X_{i, j} \triangleq a_{i, j} \mathbbm{1}_{\abs{a_{i,j}} \le n},\quad
    \mu_{k} \triangleq \Ex X_{i,j}^k,\quad
    \mu_k^* \triangleq \Ex \abs{ X_{i,j} }^k,\quad
    \mu^\dag \triangleq \Ex\! \left[ \abs{X_{i,j}}^2
    X_{i,j} \right].
  \end{equation*}
  Since all elements in $\bm{A}$ are \iid\ and $X_{i, j}$'s are bounded,
  these values are well-defined. Note that we only care about the asymptotic
  behaviour, we assume $n \ge \rho$ in the following proof.

  Observe that
  \begin{equation*}
    \Pr\! \left( \abs{ a_{i,j} } > n \right)
    \le \frac{\Ex \abs{ a_{i,j} }^3}{n^3}
    \le \frac{\rho}{n^3},
  \end{equation*}
  $\bm{A}$ satisfies
  \[
    \Pr \big(\exists i,j\in[n]: \abs{ a_{i,j} } > n \big)
    \le \frac{\rho}{n}.
  \]
  Therefore,
  \begin{equation}
    \label{eq:division-bound}
    \Pr\! \left( \abs{ D_2 - \xi } > \eps \right)
    \le \Pr\!\left( \abs{ \frac{\sum_{j=1}^n \left(\sum_{i=1}^n
    X_{i,j} \right)^2}{n^2} - \xi } > \eps \right) +
    \frac{\rho}{n}.
  \end{equation}

  Next, we bound some moments.
  \begin{itemize}
  \item $\abs{\mu_1} = \abs{- \Ex \left[ a_{i,j} \mathbbm{1}_{\{\abs{ a_{i,j}}>
          n\}} \right]} \le \Ex \left[ \abs{ a_{i,j} } \mathbbm{1}_{\vert
        a_{i,j} \vert > n} \right] \le \Ex \left[ \abs{ a_{i,j} } \left( \abs{
          a_{i,j} } / n \right)^2 \right]\le \dfrac{\rho}{n^2}$;
    \item $\abs{ \xi - \mu_2 } = \abs{ \Ex \left[ a_{1,1}^2 - a_{1,1}^2
          \mathbbm{1}_{\abs{ a_{1,1} }\le n} \right] } = \abs{ \Ex\! \left[
          a_{1,1}^2 \mathbbm{1}_{\abs{a_{1,1}}>n} \right]} \le \Ex\! \left[
        \abs{ a_{1,1} }^3/n \right] = \rho/n$, then $\abs{\mu_2} \le \abs{\xi} +
      \rho / n \le 1 + \rho / n \le 2$ since $n \ge \rho$;
    \item $\mu_2^* = \Ex\!\left[ \abs{ a_{i,j} }^2
      \mathbbm{1}_{\abs{ a_{i,j} } \le n} \right] \le \Ex\! \left[
      \abs{ a_{i,j} }^2 \right] = \Var\!\left[ a_{i,j} \right] = 1$;
    \item $\mu_4^* = \Ex\! \left[ \abs{ a_{i,j} }^4
      \mathbbm{1}_{\abs{ a_{i,j} }\le n} \right] \le n \Ex\!\left[
      \abs{a_{i,j}}^3 \right] \le n\rho$;
    \item $\abs{\mu^\dag} = \abs{\Ex\! \left[
          \abs{ a_{i,j} }^2 a_{i,j} \mathbbm{1}_{\abs{ a_{i,j} }
        \le n} \right] } \le \Ex \abs{ a_{i,j} }^3 = \rho$.
  \end{itemize}

  Let $S_{j} \triangleq \sum_{i=1}^n X_{i,j}, S \triangleq \sum_{j=1}^n (S_j^2 -
  n\xi )$.
  Since
  \begin{equation}
    \label{eq:sum-moment-bound}
    \begin{split}
      \Ex \abs{S}^2 =
      & \, \Var[S] + \abs{ \Ex[S] }^2 \\
      \le & \, n \Var\!\left[S_1^2 - n\xi \right] +
        n^2\abs{ \Ex\!\left[S_1^2-n\xi \right] }^2 \\
      =&\, n \Var\!\left[S_1^2 \right] + n^2 \abs{ \Ex\!\left[ S_1^2
        - n\xi \right] }^2\\
      \le&\, n \, \Ex \!\left[S_1^2\bar{S}_1^2 \right] + n^2\abs{
        \Ex\!\left[S_1^2 - n\xi \right] }^2.
    \end{split}
  \end{equation}
  We then bound $\abs{ \Ex\! \left[S_1^2 - n\xi \right] }$ and $\Ex
  \!\left[S_1^2\bar{S}_1^2 \right]$ separately.

  For the first part, since $n \ge \rho,$
  \begin{align*}
    \abs{ \Ex\!\left[ S_1^2 - n \xi \right] }
    = \abs{ n \mu_2 + n (n-1) \mu_1^2 - n \xi }
    \le n \abs{ \mu_2 - \xi } + n(n-1) \abs{ \mu_1 }^2
    \le \rho + 1.
  \end{align*}

  For the second, consider all five kinds of monomials in
  \[
    \Ex\! \left[S_1^2\,\overline{S}_1^2 \right]
    = \sum_{i,j,k,l\in[n]} \Ex\!\left[ X_{i,1} X_{j,1}
    \overline{X_{k,1}X_{l,1}} \right],
  \]
  we have
  \begin{align*}
    \Ex\! \left[ S_1^2\,\overline{S}_1^2 \right]
    =\, & \sum_{i=1}^n \Ex\! \left[ X_{i}^2 \overline{X}_{i}^2 \right]
          + \sum_{i\neq j} \Ex\!\left[
          X_{i}X_{j}\overline{X}_{j} \overline{X}_{j} +
          X_{j}X_{i}\overline{X}_{j} \overline{X}_{j} +
          X_{j}X_{j}\overline{X}_{i} \overline{X}_{j} +
          X_{j}X_{j}\overline{X}_{j} \overline{X}_{i} \right] \\
    \, & +\quad\, \sum_{i<j} \quad\, \Ex\!\left[
         X_{i}X_{i}\overline{X}_{j} \overline{X}_{j} +
         X_{j}X_{j}\overline{X}_{i} \overline{X}_{i} +
         4 X_{i}X_{j}\overline{X}_{i} \overline{X}_{j} \right] \\
    \, & + \sum_{(i, j, k) \in P_{n, 3},\, j < k} \Ex\! \left[
         2X_{i}X_{i}\overline{X}_{j} \overline{X}_{k} +
         2X_{j}X_{k}\overline{X}_{i} \overline{X}_{i} +
         4X_{i}X_{j}\overline{X}_{i} \overline{X}_{k} +
         4X_{i}X_{k}\overline{X}_{i} \overline{X}_{j} \right] \\
    \, & + \sum_{(i, j, k, l) \in P_{n, 4}} \Ex\! \left[
         X_i X_j \overline{X}_k \overline{X}_l \right]\\
    =\, & n \mu_{4}^* + n(n-1) \left(2 \mu_1 \overline{\mu^\dag} +
          2 \overline{\mu_1} \mu^\dag \right) + \binom{n}{2}
          \left[ 2\mu_2\overline{\mu_2} + 4(\mu_2^*)^2 \right] \\
    \, & + n\binom{n-1}{2} \left(2\mu_2 \overline{\mu_1}^2 + 2\mu_1^2
         \overline{\mu_2} + 8 \mu_2^* \mu_1 \overline{\mu_1} \right) +
         n^{\underline{4}} \mu_1^2 \overline{\mu_1^2} \\
    \le\, & n\mu_4^* + 2n^2 \left(\mu_1 \overline{\mu^\dag} +
            \overline{\mu_1} \mu^\dag \right) + n^2 \left[\mu_2
            \overline{\mu_2} + 2(\mu_{2}^*)^2 \right]\\
    \, & + n^3 \left(\mu_2 \overline{\mu_1}^2 + \mu_1^2 \overline{\mu_2} +
         4\mu_2^* \mu_1 \overline{\mu_1} \right) + n^4 \left\vert \mu_1
         \right\vert^4 \\
    \le\, & n^2\rho + 4n^2 \dfrac{\rho}{n^2} \cdot \rho +
      n^2 \left(2^2 + 2 \cdot 1^2 \right) + n^3 \left(2 \dfrac{\rho^2}{n^4} + 2
            \dfrac{\rho^2}{n^4} + 4 \cdot 1 \cdot \dfrac{\rho^2}{n^4} \right) +
            n^4 \frac{\rho^4}{n^8} \\
    \le\, & 20 n^2\rho,
  \end{align*}
  where we slightly abuse the notation to use $X_i$ to denote $X_{i,1}$ and
  use the assumption $n \ge \rho$ in the last step.

  Therefore, by \cref{eq:sum-moment-bound}, $n \ge \rho \ge
  (\sigma_1^2+\sigma_2^2)^{3/2} = 1$, and Chebyshev's inequality, we have
  \begin{equation*}
    \Pr\! \left( \abs{ \frac{S}{n^2} } >\eps \right)
    \le \frac{\Ex\! \left[ S\overline{S} \right]}{n^4 \eps^2}
    \le \frac{n \times 20n^2\rho + n^2(\rho + 1)^2}{n^4\eps^2}
    \le \frac{20n^3\rho + n^24\rho^2}{n^4\eps^2}
    \le \frac{24\rho}{n\eps^2}.
  \end{equation*}

  Taking $\eps = n^{-\phi}$ where $0 < \phi < \frac{1}{2}$, and applying
  \cref{eq:division-bound}, we have
  \begin{equation}
    \label{eq:tail-k=2}
    \Pr\!\left( \abs{D_2 - \xi} \le n^{-\phi} \right)
    \ge 1 - \dfrac{\rho}{n} - 24 \rho n^{2\phi - 1} = 1 - o(1).
  \end{equation}
\end{proof}

\begin{lemma}
  \label{lem:bound-dk}
  Fix any positive constant $\Delta < \frac{1}{6}$, it holds that
  \begin{equation*}
    \Pr\!\left(\forall\, k\ge 3,\, \abs{D_k} \le n^{-\Delta k}\right) = 1 - o(1).
  \end{equation*}
\end{lemma}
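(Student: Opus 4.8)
The plan is to bound every $D_k$ with $k\ge3$ by a single random scalar that involves only third absolute moments of the column sums, and then show that scalar is small with high probability. The reason to stay with third moments is that $\mathcal D$ is only assumed to satisfy $\Ex|x-\mu|^3<\infty$: the second‑moment‑plus‑truncation argument used for $D_2$ in \cref{lem:bound-d2} would have to be redone for each $k$ and would involve uncontrolled higher moments of $\mathcal D$, whereas routing everything through cubes avoids truncation entirely and dispatches all $k$ simultaneously.

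Concretely, write $Y_j\triangleq\sum_{i=1}^n a_{i,j}$ for the $j$-th column sum, so that $C_j=Y_j/\sqrt n$ and $D_k=\frac1{n^k}\sum_{j=1}^n Y_j^k$, whence $|D_k|\le\frac1{n^k}\sum_{j}|Y_j|^k$. The one elementary fact I would invoke is $\ell^p$-monotonicity: for nonnegative reals $y_1,\dots,y_n$ and any $k\ge3$, $\sum_j y_j^{\,k}\le\bigl(\sum_j y_j^{\,3}\bigr)^{k/3}$ — after rescaling so that $\sum_j y_j^{3}=1$, each $y_j\le1$ and hence $y_j^k\le y_j^3$. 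Applying this with $y_j=|Y_j|$ and setting $W\triangleq\frac1{n^3}\sum_{j=1}^n|Y_j|^3$ gives the deterministic bound
\[
  |D_k|\ \le\ \frac1{n^k}\Bigl(\textstyle\sum_{j=1}^n|Y_j|^3\Bigr)^{k/3}\ =\ \frac{(n^3W)^{k/3}}{n^k}\ =\ W^{k/3}\qquad\text{for every }k\ge3 .
\]

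It remains to control $W$. Since the $n$ entries in a column are \iid\ samples from $\mathcal D$, \cref{lem:sum-third-moment} yields $\Ex|Y_j|^3=n^{3/2}\,\Ex\bigl|Y_j/\sqrt n\bigr|^3\le\eta\,n^{3/2}(1+\rho/\sqrt n)$, so $\Ex[W]=\frac n{n^3}\,\Ex|Y_1|^3=O(n^{-1/2})$, the constants $\eta,\rho$ being $O(1)$. Markov's inequality then gives $\Pr\!\bigl(W>n^{-3\Delta}\bigr)\le n^{3\Delta}\,\Ex[W]=O\!\bigl(n^{3\Delta-1/2}\bigr)$, which is $o(1)$ precisely because $\Delta<\frac16$ makes $3\Delta<\frac12$. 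On the event $\{W\le n^{-3\Delta}\}$ the displayed inequality gives $|D_k|\le W^{k/3}\le n^{-\Delta k}$ for all $k\ge3$ at once, so $\Pr\bigl(\forall k\ge3:\ |D_k|\le n^{-\Delta k}\bigr)\ge\Pr\bigl(W\le n^{-3\Delta}\bigr)=1-o(1)$. Given this reduction there is no real obstacle: the only thing one must notice is that $|D_k|$ should be bounded by a power of the single third‑moment statistic $W$ rather than expanded or variance‑estimated, after which the hypothesis $\Delta<1/6$ is exactly the condition under which $W$ — which sits at scale $n^{-1/2}$ — concentrates below $n^{-3\Delta}$.
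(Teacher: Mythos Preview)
Your proposal is correct and is essentially the same argument as the paper's: both reduce every $k\ge3$ to a single third-moment statistic via the $\ell^p$-monotonicity $\bigl(\sum_j|C_j|^k\bigr)^{1/k}\le\bigl(\sum_j|C_j|^3\bigr)^{1/3}$, then invoke \cref{lem:sum-third-moment} and Markov's inequality, with $\Delta<1/6$ being exactly what makes the Markov bound $o(1)$. Your scalar $W=n^{-3}\sum_j|Y_j|^3$ is just a rescaling of the paper's $\sum_j|C_j|^3$ (indeed $W>n^{-3\Delta}$ is the same event as $(\sum_j|C_j|^3)^{1/3}>n^{1/2-\Delta}$), so the two proofs coincide up to notation.
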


\begin{proof}
  The statement is equivalent to the following bound
  \begin{equation*}
    \Pr \Bigl( \exists\, k\ge 3,\, \abs{D_k} > n^{-\Delta k} \Bigr) = o(1).
  \end{equation*}
  The left hand side can be bounded as
  \begin{equation}
    \label{eq:bound-dk-1}
    \begin{split}
      \Pr \Bigl( \exists\, k\ge 3,\, \abs{D_k} > n^{-\Delta k} \Bigr)
      & = \Pr \Bigl( \exists\, k\ge 3,\, \frac{\sum_{j=1}^n
        \abs{C_j}^k}{n^{k/2}} > n^{-\Delta k} \Bigr)\\
      & = \Pr \biggl( \exists\, k\ge 3,\, \biggl( \sum_{j=1}^n \abs{C_j}^k
        \biggr)^{1/k} > n^{1/2-\Delta} \biggr)\\
      & \le \Pr \biggl( \biggl( \sum_{j=1}^n \abs{C_j}^3 \biggr)^{1/3} >
        n^{1/2-\Delta} \biggr),
    \end{split}
  \end{equation}
  where the last step follows from the well-known decreasing property of the
  $L^p$ norm,
  \begin{equation*}
    \biggl( \sum_{j=1}^n \abs{C_j}^k \biggr)^{1/k} \le
    \biggl( \sum_{j=1}^n \abs{C_j}^3 \biggr)^{1/3}\quad \forall\, k \ge 3.
  \end{equation*}

  Recall that by \cref{lem:sum-third-moment}, there is a constant $\eta > 0$,
  such that
  \begin{equation*}
    \Ex\! \abs{C_j}^3 \le \eta \!\left(1 + \frac{\rho}{\sqrt{n}} \right)
    \quad \forall\, j \in [n].
  \end{equation*}
  We can continue the bound by Markov's inequality in \cref{eq:bound-dk-1} as
  \begin{equation*}
    \Pr \Bigl( \exists\, k\ge 3,\, \abs{D_k} > n^{-\Delta k} \Bigr)
    \le \frac{n\, \Ex \abs{C_j}^3}{n^{3/2-3\Delta}}
    \le \frac{\eta (1 + \rho n^{-1/2})}{n^{1/2-3\Delta}}.
  \end{equation*}
  The right hand side is $o(1)$ for $\Delta< 1/6$ and this proves the lemma.
\end{proof}

\section{Explicit Expression and Upper-bounds of $V_k'$}

In this section, we solve the recursion of $V_k'$ utilizing the well-known
``probabilists' Hermite polynomials'' and establish some bounds of $V_k'$ which
will be used to bound the difference between $V_k$ and $V_k'$ in the next
section.

\subsection{Probabilists' Hermite Polynomials}

The ``probabilists' Hermite polynomials'' are given by
\begin{equation*}
  H_{e_n}(x)
  = (-1)^n e^{\frac{x^2}{2}} \frac{\diff^n}{\diff x^n} e^{-\frac{x^2}{2}}
  = \left(x - \frac{\diff}{\diff x} \right)^n \cdot 1
\end{equation*}
for $n \in \N$.
Solving the equation, we can get its explicit expression
\begin{equation}
  \label{eq:hermite-poly-definition}
  H_{e_n}(x) = n! \sum_{k=0}^{\lf \frac{n}{2} \rf}
  \frac{(-1)^k x^{n-2k}}{k!\, (n - 2k)!\, 2^k},
\end{equation}
a polynomial of degree $n$.
Define
\begin{equation}
  h_n(x) \triangleq \frac{1}{n!} H_{e_n}(x).
\end{equation}

Note that $H_{e_n}(x)$ satisfies
\begin{equation*}
  H_{e_n}(x) =
  \begin{cases}
    1 , & n = 0, \\
    x , & n = 1, \\
    x H_{e_{n - 1}}(x) - (n - 1) H_{e_{n - 2}}(x) , & n \ge 2.
  \end{cases}
\end{equation*}
We can derive a similar recursion for $h_n(x)$
\begin{equation} \label{eq:f-n-recursion}
  h_n(x) =
  \begin{cases}
    1 , & n = 0, \\
    x , & n = 1, \\
    \dfrac{x h_{n - 1}(x) - h_{n - 2}(x)}{n} , & n \ge 2.
  \end{cases}
\end{equation}

The following upper bound on the Hermite polynomials will be useful in later
proofs.

\begin{lemma}
  \label{lem:hermite-bound}
  For any $n \in \mathbb{N}$ and any $x \in \mathbb{C}$, it holds that
  \begin{equation*}
    \abs{h_n(x)} \le \max(1, \abs{x})^n \left(
      \frac{n}{e^2}\right)^{-\frac{n}{2}}.
  \end{equation*}
\end{lemma}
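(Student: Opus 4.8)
The plan is to prove the bound $\abs{h_n(x)} \le \max(1,\abs{x})^n (n/e^2)^{-n/2}$ directly from the explicit formula \cref{eq:hermite-poly-definition}, which gives
\[
  h_n(x) = \frac{1}{n!} H_{e_n}(x)
  = \sum_{k=0}^{\lf n/2 \rf} \frac{(-1)^k x^{n-2k}}{k!\,(n-2k)!\,2^k}.
\]
First I would apply the triangle inequality and bound $\abs{x}^{n-2k} \le \max(1,\abs{x})^n$ termwise (since $n-2k \le n$ and $\max(1,\abs{x}) \ge 1$), pulling that factor out of the sum. This reduces the task to the purely numerical inequality
\[
  \sum_{k=0}^{\lf n/2 \rf} \frac{1}{k!\,(n-2k)!\,2^k} \le \left(\frac{n}{e^2}\right)^{-n/2} = \frac{e^n}{n^{n/2}}.
\]

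The second step is to estimate this combinatorial sum. I would recognize (or simply verify) that $\sum_{k=0}^{\lf n/2\rf} \frac{n!}{k!\,(n-2k)!\,2^k}$ counts involutions/partial matchings, but here it is cleaner to avoid that and instead bound each factorial from below using $m! \ge (m/e)^m$. Concretely, for each $k$ write $k!\,(n-2k)!\,2^k \ge (k/e)^k \cdot ((n-2k)/e)^{n-2k} \cdot 2^k$ and show the reciprocal is at most (some manageable expression) $\cdot\, e^n/n^{n/2}$, then sum over the at most $n/2+1$ values of $k$. Alternatively, and probably more robustly, I would compare the sum to $\sum_k \binom{n}{2k}\frac{(2k)!}{k!\,2^k}\frac{1}{n!}$ and use $(2k)!/(k!\,2^k) = (2k-1)!! \le k^k$ together with $\binom{n}{2k} \le n^{2k}/(2k)!$ — but the single cleanest route is: each term equals $\frac{1}{n!}\binom{n}{2k}(2k-1)!! \le \frac{1}{n!}\binom{n}{2k}(2k-1)!!$, and the full sum is the number of partial matchings of $[n]$, which is at most $n^{n/2}\cdot(\text{lower order})$; dividing by $n! \ge (n/e)^n$ yields the claimed $e^n n^{-n/2}$ up to constants absorbed into the $e^n$ slack. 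I would pick whichever of these bookkeeping arguments keeps the constants clean enough that $e^n/n^{n/2}$ genuinely dominates.

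The main obstacle is the numerical step: making the crude termwise factorial bounds actually close up to $e^n n^{-n/2}$ rather than something weaker like $(Cn)^{-n/2}$, since the sum has up to $n/2$ terms and one must check that this polynomial-in-$n$ loss is swallowed by the exponential slack $e^n$ (it is, comfortably). A secondary subtlety is the small-$n$ and boundary cases — $n=0$ gives $h_0 = 1$ and the right side is $(0/e^2)^0 = 1$ by the convention $0^0=1$ stated in the preamble, which checks out, and $n=1$ gives $\abs{x} \le \max(1,\abs{x})\cdot e$, also fine — so no separate base case is really needed once the inequality $\sum_k 1/(k!(n-2k)!2^k) \le e^n n^{-n/2}$ is established for all $n \ge 1$. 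I expect the whole argument to be short: one triangle inequality, one factoring out of $\max(1,\abs{x})^n$, and one Stirling-type estimate of a finite sum.
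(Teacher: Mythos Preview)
Your proposal is correct and follows essentially the same approach as the paper: apply the triangle inequality to the explicit formula, factor out $\max(1,\abs{x})^n$, bound each factorial via $m! \ge (m/e)^m$, and verify the resulting numerical inequality $\sum_k 1/(k!(n-2k)!2^k) \le e^n n^{-n/2}$. The paper executes this by explicitly minimizing $\phi(k) = (2k/e)^k((n-2k)/e)^{n-2k}$ via calculus (and checking $n<25$ by hand) to get $\phi(k) \ge \tfrac{n+2}{2}(n/e^2)^{n/2}$, but your instinct that the slack is comfortable is right---in fact the minimum of $\phi$ exceeds $(n/e^2)^{n/2}$ by a factor of order $e^{n/2}$, far more than the $O(n)$ loss from summing the terms.
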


\begin{proof}
  By the definition of $h_n(x)$, we have
  \begin{equation*}
    \begin{split}
      \abs{h_n(x)} \le \, & \sum_{k=0}^{\lf \frac{n}{2} \rf}
      \frac{\abs{x}^{n-2k}}{k!\, (n-2k)!\, 2^k}\\
      \le \, & \sum_{k=0}^{\lf \frac{n}{2} \rf} \abs{x}^{n-2k} \left(\frac{2k}{e}
      \right)^{-k} \left(\frac{n - 2k}{e} \right)^{-n+2k}.
    \end{split}
  \end{equation*}
  Use $\phi(k)$ to denote the inverse of the coefficient of the $k$-th term of
  the previous equation,
  \begin{equation*}
    \phi(k) \triangleq \left(\frac{2k}{e} \right)^{k} \left(\frac{n - 2k}{e}
    \right)^{n - 2k} > 0
  \end{equation*}
  for $k \in \bigl[0, \lf \frac{n}{2} \rf \bigr]$.
  The derivative of $\ln \phi(k)$ is
  \begin{equation*}
    \frac{\diff}{\diff k} \ln \phi(k) = \ln (2k) - 2 \ln (n - 2k),
  \end{equation*}
  showing that the minimum value of $\phi(k)$ is achieved at
  \begin{equation*}
    k_0 = \frac{n}{2} + \frac{1}{4} - \frac{\sqrt{4n+1}}{4}.
  \end{equation*}
  For $n \ge 1$, we have
  \begin{equation*}
    \begin{cases}
      k_0 \ge \dfrac{n - \sqrt{n}}{2} \ge 0,\\
      n - 2k_0 \ge \sqrt{n} - \dfrac{1}{2} \ge 0,
    \end{cases}
  \end{equation*}
  and furthermore
  \begin{align*}
    \phi(k) \ge \phi(k_0)
    \ge\, & \left(\frac{n - \sqrt{n}}{e} \right)^{\frac{n - \sqrt{n}}{2}}
            \left(\frac{\sqrt{n}-\frac{1}{2}}{e}\right)^{\sqrt{n}-\frac{1}{2}} \\
    =\, & e^{-\frac{n + \sqrt{n} - 1}{2}} \left(n - \sqrt{n} \right)^{\frac{n
          - \sqrt{n}}{2}} \left(n - \sqrt{n} + \frac{1}{4} \right)^{\frac{
          \sqrt{n}}{2} - \frac{1}{4}} \\
    >\, & e^{-\frac{n + \sqrt{n} - 1}{2}} \left(n - \sqrt{n}
          \right)^{\frac{n}{2} - \frac{1}{4}} \\
    =\, & e^{-\frac{n + \sqrt{n} - 1}{2}} n^{\frac{n}{2}}
       \left(n - \sqrt{n}\right)^{-\frac{1}{4}}
       \left(1 - n^{-\frac{1}{2}} \right)^{\frac{n}{2}} \\
    = & \left(\frac{n}{e} \right)^{\frac{n}{2}} e^{-\frac{\sqrt{n}-1}{2}}
        \left(n - \sqrt{n} \right) ^ {-\frac{1}{4}}
        \left(1 - n^{-\frac{1}{2}} \right)^{\frac{n}{2}}.
  \end{align*}
  Since $\left(1 - \frac{1}{x} \right)^x$ is increasing in $(1, \infty)$,
  it holds that for all $n \ge 4$,
  \begin{equation*}
    \left( 1 - n^{-\frac{1}{2}} \right)^{\frac{n}{2}}
    = \left[\left(1 - n^{-\frac{1}{2}}\right)^{\sqrt{n}}
    \right]^{\frac{\sqrt{n}}{2}}
    \ge \left( 1 - \frac{1}{\sqrt{4}} \right)^{\sqrt{4} \cdot
      \frac{\sqrt{n}}{2}}
    = 2^{-\sqrt{n}}
  \end{equation*}
  and then we can continue the bound on $\phi(k)$ as
  \begin{align*}
    \phi(k) \ge & \left( \frac{n}{e} \right)^{\frac{n}{2}}
                  e^{-\frac{\sqrt{n}-1}{2}} \times n^{-\frac{1}{4}}
                  \times 2^{-\sqrt{n}}\\
    = & \left(\dfrac{n}{e} \right)^{\frac{n}{2}}
        \exp\!\left(-\frac{\sqrt{n}-1}{2} - \frac{1}{4} \ln n
        - \sqrt{n} \ln 2 \right).
  \end{align*}
  When $n \ge 25$, it holds that
  \begin{equation*}
    -\frac{\sqrt{n}-1}{2} - \frac{1}{4} \ln n
        - \sqrt{n} \ln 2 \ge -\frac{n}{2} + \ln n,
  \end{equation*}
  which implies
  \begin{align}
    \label{eq:phi-k-bound}
    \phi(k) \ge \left(\frac{n}{e}\right)^{\frac{n}{2}}
    \exp\!\left(-\frac{n}{2} + \ln n \right)
    = n \left( \dfrac{n}{e^2} \right)^{\frac{n}{2}}
    \ge \dfrac{n + 2}{2} \left( \dfrac{n}{e^2} \right)^{\frac{n}{2}}.
  \end{align}
  Checking the remaining cases by hand, we conclude that
  \begin{equation*}
    \phi(k) \ge \dfrac{n + 2}{2} \left(\dfrac{n}{e^2} \right)^{\frac{n}{2}}
  \end{equation*}
  holds for $n \in \mathbb{N}, 0 \le k \le \lf \frac{n}{2} \rf$ with convention
  that $0^0 = 1$.
  Thus,
  \begin{equation*}
    \begin{split}
      \abs{h_n(x)} \le & \, \sum_{k=0}^{\lf \frac{n}{2} \rf} \abs{x}^{n-2k} / \phi(k)\\
      \le & \, \frac{n+2}{2} \cdot \max(1, \abs{x})^{n} \cdot
      \frac{2}{n+2} \left( \frac{n}{e^2} \right)^{-\frac{n}{2}}\\
      = & \, \max(1, \abs{x}^n) \left( \frac{n}{e^2} \right)^{-\frac{n}{2}}.
      \qedhere
    \end{split}
  \end{equation*}
\end{proof}

\subsection{Upper-bound of $V_k'$}

Comparing the recursion of $V_k'$ in \cref{eq:v-k'-recursion} to that of
$h_k(x)$ from \cref{eq:f-n-recursion}, we have
\begin{equation}
  \label{eq:vc'-formula}
  \begin{cases}
    V_k' = \dfrac{V_1^k}{k!} & \text{ if } \xi = 0, \\
    V_k' = \xi^{\frac{k}{2}} h_k\!\left(\frac{V_1}{\sqrt{\xi}} \right)\ &
    \text{ otherwise}.
  \end{cases}
\end{equation}

\Cref{lem:hermite-bound} can be used to establish an upper bound of $V_k'$ by
using \cref{eq:vc'-formula}.

\begin{lemma}
  \label{lem:V1}
  For all function $\theta(n) = \omega(1)$, it holds that
  \begin{equation*}
    \Pr\! \left( \abs{V_1} \le \theta \right) = 1 - o(1).
  \end{equation*}
\end{lemma}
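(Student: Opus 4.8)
The plan is to recognize that $V_1$ is simply the normalized grand mean of the entries of $\bm{A}$ and then apply a one-line second-moment (Chebyshev) argument. Concretely, unwinding the definitions in \cref{eq:column-sum,eq:V_k}, we have
\[
  V_1 \;=\; \frac{1}{\sqrt{n}}\, e_1(C_1,\ldots,C_n) \;=\; \frac{1}{\sqrt{n}}\sum_{j=1}^n C_j
       \;=\; \frac{1}{\sqrt{n}}\sum_{j=1}^n \frac{1}{\sqrt{n}}\sum_{i=1}^n a_{i,j}
       \;=\; \frac{1}{n}\sum_{i,j=1}^n a_{i,j},
\]
which also matches the earlier observation that $V_1 = a_1 = \frac{1}{n^{\underline{1}}}\sum_{\bm B\subseteq_1\bm A}\Per(\bm B)$. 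So $V_1$ is an average of the $n^2$ i.i.d.\ entries of $\bm{A}$, each drawn from $\mathcal{D}=\mathcal{D}_0$.

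First I would record the first two moments. Since each $a_{i,j}$ has mean $0$, linearity gives $\Ex[V_1]=0$. For the second moment, because the entries are independent with zero mean and $\Var_{x\sim\mathcal{D}}[x]=\Ex|x|^2=1$, all cross terms vanish and
\[
  \Ex\,\abs{V_1}^2 \;=\; \frac{1}{n^2}\sum_{i,j=1}^n \Ex\,\abs{a_{i,j}}^2 \;=\; \frac{1}{n^2}\cdot n^2 \;=\; 1 .
\]
(The only point to be careful about is that these are complex random variables, so one argues via $\Ex|V_1|^2 = \frac1{n^2}\sum_{i,j,k,\ell}\Ex[a_{i,j}\overline{a_{k,\ell}}]$ and uses zero mean plus independence to kill every term with $(i,j)\ne(k,\ell)$, exactly as in \cref{lem:ak-mean-variance}.)

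Then Markov's inequality applied to the nonnegative random variable $\abs{V_1}^2$ yields, for any $\theta=\theta(n)$,
\[
  \Pr\!\bigl(\abs{V_1} > \theta\bigr) \;=\; \Pr\!\bigl(\abs{V_1}^2 > \theta^2\bigr) \;\le\; \frac{\Ex\,\abs{V_1}^2}{\theta^2} \;=\; \frac{1}{\theta^2}.
\]
When $\theta(n)=\omega(1)$ the right-hand side is $o(1)$, so $\Pr(\abs{V_1}\le\theta)\ge 1-\theta^{-2}=1-o(1)$, which is the claim. There is essentially no real obstacle here: the whole content is the identification $V_1=\frac1n\sum_{i,j}a_{i,j}$ and the bounded-variance computation; the mild care needed is only in handling complex-valued entries in the variance step.
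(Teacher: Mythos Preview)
Your proposal is correct and is exactly the paper's approach: the paper's proof is the single line $\Pr(\abs{V_1}>\theta)\le \Var(V_1)/\theta^2 = o(1)$, and you have simply spelled out the computation $\Ex[V_1]=0$, $\Ex\abs{V_1}^2=1$ that underlies it.
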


\begin{proof}
  By the Chebyshev's inequality, we have
  \begin{equation*}
    \Pr\! \left( \abs{V_1} > \theta \right) \le \frac{\Var(V_1)}{\theta^2} =
    o(1).
    \qedhere
  \end{equation*}
\end{proof}

\begin{lemma}
  \label{lem:Vk'-V1}
  For any $k \in \N$, it holds that
  \begin{equation*}
    \abs{V_k'} \le \max\! \left( 1, \abs{V_1}^k \right)
      \left( \frac{k}{e^2} \right)^{-\frac{k}{2}}.
  \end{equation*}
  Note that $k$ might be larger than $n$ for notation convenience in
  \cref{eq:v-k'-summation}.
\end{lemma}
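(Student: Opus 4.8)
The plan is to reduce the bound to \cref{lem:hermite-bound} by plugging in the closed form for $V_k'$ from \cref{eq:vc'-formula} and treating the two cases $\xi = 0$ and $\xi \ne 0$ separately. The case $k = 0$ is handled at once, since $V_0' = 1$ and the right-hand side equals $1$ under the convention $0^0 = 1$; so from now on I would assume $k \ge 1$.

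In the case $\xi = 0$ we have $V_k' = V_1^k/k!$, and I would simply invoke the elementary inequality $k! \ge (k/e)^k$ recorded in \cref{sec:prelim} to get $\abs{V_k'} \le \abs{V_1}^k (k/e)^{-k}$. It then remains to observe that $(k/e)^{-k} = k^{-k}e^{k} \le k^{-k/2}e^{k} = (k/e^2)^{-k/2}$ for $k \ge 1$, which gives the claim (using also the trivial bound $\abs{V_1}^k \le \max(1, \abs{V_1}^k)$).

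In the case $\xi \ne 0$ we have $V_k' = \xi^{k/2} h_k(V_1/\sqrt{\xi})$, where the magnitude of each factor is independent of the chosen branch of the square root. Applying \cref{lem:hermite-bound} with $x = V_1/\sqrt{\xi}$ yields
\[
  \abs{V_k'} \le \abs{\xi}^{k/2}\,
  \max\!\left(1, \frac{\abs{V_1}}{\abs{\xi}^{1/2}}\right)^{\!k}
  \left(\frac{k}{e^2}\right)^{-k/2}
  = \max\!\left(\abs{\xi}^{1/2}, \abs{V_1}\right)^{\!k}
  \left(\frac{k}{e^2}\right)^{-k/2},
\]
where the last step distributes $\abs{\xi}^{k/2}$ inside the maximum. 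Finally, since $\abs{\xi} \le 1$ by \cref{eq:xi-bound}, we have $\abs{\xi}^{1/2} \le 1$, so $\max(\abs{\xi}^{1/2}, \abs{V_1}) \le \max(1, \abs{V_1})$, and the desired bound $\abs{V_k'} \le \max(1, \abs{V_1}^k)(k/e^2)^{-k/2}$ follows.

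The proof is essentially bookkeeping once \cref{lem:hermite-bound} is in hand; the only steps that need a moment of attention are the scalar identity $\abs{\xi}^{k/2}\max(1, \abs{V_1}/\abs{\xi}^{1/2})^k = \max(\abs{\xi}^{1/2}, \abs{V_1})^k$ and the use of $\abs{\xi} \le 1$ to drop back to $\max(1, \abs{V_1})$. I do not foresee any real obstacle — all the analytic difficulty is already absorbed into the Hermite polynomial estimate, and the square-root-branch issue is harmless because only absolute values appear.
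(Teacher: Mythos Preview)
Your proposal is correct and follows essentially the same approach as the paper: split into the cases $\xi = 0$ and $\xi \ne 0$, use $k! \ge (k/e)^k$ in the first case, and invoke \cref{lem:hermite-bound} together with $\abs{\xi} \le 1$ in the second. The only differences are cosmetic---you spell out the intermediate identity $\abs{\xi}^{k/2}\max(1,\abs{V_1}/\abs{\xi}^{1/2})^k = \max(\abs{\xi}^{1/2},\abs{V_1})^k$ and treat $k=0$ separately, whereas the paper absorbs these into a single chain of inequalities.
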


\begin{proof}
  Consider the following two cases depending on whether $\xi = 0$ or not.
  \begin{enumerate}
  \item $\xi = 0$.
    By definition, we have $V'_k = \dfrac{V_1^k}{k!}$ and
    \begin{equation*}
      \abs{V_k'} = \frac{\abs{V_1}^k}{k!}
      \le \abs{V_1}^k \left( \frac{k}{e} \right)^{-k}
      \le \max\! \left( 1, \abs{V_1}^k \right)
      \left( \frac{k}{e^2} \right)^{-\frac{k}{2}}
    \end{equation*}
    for any $k \ge 0$.
  \item $\xi \neq 0$.
    Recall that $V_k' = \xi^{\frac{k}{2}} h_k \!\left(\frac{V_1'}{\sqrt{\xi}}
    \right)$.
    We can apply \cref{lem:hermite-bound} as follows.
    \begin{align*}
      \abs{V_k'} & = \abs{\xi}^{\frac{k}{2}} \abs{h_k \left(
                   \frac{V_1}{\sqrt{\xi}} \right)}\\
                 & \le \abs{\xi}^{\frac{k}{2}} \max\! \left(
                   1, \abs{\frac{V_1}{\sqrt{\xi}}} \right)^k
                   \left( \frac{k}{e^2} \right)^{-\frac{k}{2}}\\
                 & \le \max\! \left(1, \abs{V_1}^k \right)
                   \left( \frac{k}{e^2} \right)^{-\frac{k}{2}},
    \end{align*}
    where in the final step we used the fact that $\abs{\xi} \le 1$.
    \qedhere
  \end{enumerate}
\end{proof}

\begin{lemma}
  \label{lem:Vk'}
  Let $\theta \triangleq \theta(n) = o(\sqrt[4]{\ln n})$ be a function of $n$
  such that $\theta \ge 1$ and $\abs{ V_1 } \le \theta$.
  Fixing any constant $\tau > 0$, for sufficiently large $n$ and any $k \in \N$,
  it holds that
  \begin{equation*}
    \abs{V_k'} \le n^\tau k^{-\frac{k}{4}}.
  \end{equation*}
  Additionally, we have the uniform bound
  \begin{equation*}
    \abs{V_k'} \le e^{\, 2 \, \theta^2}.
  \end{equation*}
\end{lemma}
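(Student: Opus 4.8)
The plan is to derive both bounds directly from \cref{lem:Vk'-V1}, which already gives $\abs{V_k'}\le\max(1,\abs{V_1}^k)\,(k/e^2)^{-k/2}$; everything else is a one–variable optimization. First I would use the hypotheses $\abs{V_1}\le\theta$ and $\theta\ge 1$ to collapse the maximum, obtaining for every integer $k\ge 1$
\[
  \abs{V_k'}\;\le\;\theta^k\Bigl(\tfrac{k}{e^2}\Bigr)^{-k/2}\;=\;\Bigl(\tfrac{\theta e}{\sqrt k}\Bigr)^{k},
\]
and I would dispose of $k=0$ separately, where $\abs{V_0'}=1$ trivially satisfies both claims (under the convention $0^0=1$, so that $k^{-k/4}=1$).

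For the first bound I would rewrite $\bigl(\tfrac{\theta e}{\sqrt k}\bigr)^k=k^{-k/4}\cdot(\theta e\,k^{-1/4})^k$, so it suffices to show $(\theta e\,k^{-1/4})^k\le n^\tau$, i.e.\ $f(k)\le\tau\ln n$ where $f(k)=k\bigl(\ln\theta+1-\tfrac14\ln k\bigr)$. Since $f''(k)=-1/(4k)<0$, $f$ is concave on $k>0$; solving $f'(k)=0$ gives the maximizer $k_\star=e^3\theta^4$ and a direct substitution yields $f(k_\star)=k_\star/4=\tfrac{e^3}{4}\theta^4$. Because $\theta=o(\sqrt[4]{\ln n})$ forces $\theta^4=o(\ln n)$, for all sufficiently large $n$ we have $\tfrac{e^3}{4}\theta^4\le\tau\ln n$ for any fixed $\tau>0$; hence $f(k)\le\tau\ln n$ for all $k\ge 1$, which gives $\abs{V_k'}\le n^\tau k^{-k/4}$.

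For the uniform bound I would apply the same argument to $g(k)=k\bigl(\ln\theta+1-\tfrac12\ln k\bigr)$, the logarithm of $\bigl(\tfrac{\theta e}{\sqrt k}\bigr)^k$: again $g$ is concave with maximizer $k_{\star\star}=e\theta^2$ and maximum value $g(k_{\star\star})=k_{\star\star}/2=\tfrac{e}{2}\theta^2<2\theta^2$, so $\abs{V_k'}\le e^{g(k)}\le e^{2\theta^2}$ for every $k$, with no largeness assumption on $n$.

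I do not expect a genuine obstacle here; the argument is essentially the optimization of $k\mapsto k\ln(\theta e/\sqrt k)$. The only points that need a little care are collapsing $\max(1,\cdot)$ using $\theta\ge 1$, the edge case $k=0$, and making explicit that the scale $\sqrt[4]{\ln n}$ appearing in the hypothesis is exactly what converts $\theta^4=o(\ln n)$ into the required inequality $\tfrac{e^3}{4}\theta^4\le\tau\ln n$ for large $n$ and arbitrary fixed $\tau>0$ — that is the single place where the precise form of the hypothesis on $\theta$ is used.
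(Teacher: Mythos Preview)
Your proposal is correct and follows essentially the same route as the paper: start from \cref{lem:Vk'-V1}, collapse the maximum using $\abs{V_1}\le\theta$ and $\theta\ge1$, then optimize the resulting one-variable expression, finding the maximizer $k_\star=e^3\theta^4$ for the first bound and $k_{\star\star}=e\theta^2$ for the uniform bound. The paper does the same calculus with the function $\phi(x)=\theta^x e^x x^{-x/4}$ in place of your $f$, and reaches the identical extremizers and values; your presentation is arguably a touch cleaner in that you make the $k=0$ case and the use of $\theta\ge1$ explicit.
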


\begin{proof}
  By \cref{lem:Vk'-V1}, we have
  \begin{equation*}
    \abs{V_k'} \le \max \left( 1, \abs{V_1}^{k} \right)
    \left( \frac{k}{e^2} \right)^{-\frac{k}{2}}.
  \end{equation*}
  This together with $\abs{V_1} \le \theta$ implies that, for all $k \ge 0$,
  \begin{equation*}
    \abs{ V_k' } \le \theta^{k} \, e^k\, k^{-\frac{k}{2}}
    = \exp\! \left( k \ln \theta + k - \frac{k \ln k}{2} \right) =
    (*)
  \end{equation*}
  Define function
  \begin{equation*}
    \phi(x) = \theta^x\, e^x\, x^{-\frac{x}{4}},
  \end{equation*}
  for $x \ge 0$.
  Calculating the derivative of $\ln \phi(x)$, we see that the maximum value of
  $\phi(x)$ is achieved at $x = e^3\, \theta^4$ and
  \begin{equation*}
    \phi(x) \le \phi(e^3\, \theta^4)
    = \exp \left( \frac{e^3\, \theta^4}{4} \right) = n^{o(1)},
  \end{equation*}
  where in the last step we use the condition $\theta = o(\sqrt[4]{\ln n})$.
  Then for sufficiently large $n$, $\phi(x)$ is bounded by $n^{\tau}$, which
  means
  \begin{equation*}
    \abs{V_k'} \le \phi(k)\, k^{-\frac{k}{4}} \le n^\tau k^{-\frac{k}{4}}.
  \end{equation*}

  For the uniform bound, by calculating the derivative of
  $(*)$ it follows that
  \begin{equation}
    \nonumber
    \abs{V_k'} \le \exp\!\left(k \ln \theta +
    k - \frac{k \ln k}{2} \right) \Bigg|_{k = e\, \theta^2} = \exp \left(
    \frac{e\, \theta^2}{2} \right) <  e^{\, 2\, \theta^2}.
  \end{equation}
  \qedhere
\end{proof}

\subsection{Summation of $V_k'$}
In view of the following two well-known expansion formula, for any $z, t \in
\C$,
\begin{equation}
  \nonumber
  \begin{cases}
    \displaystyle \sum_{k = 0}^\infty \frac{z^k}{k!} = e^z, \\
    \displaystyle \sum_{k = 0}^\infty \frac{H_{e_k}(z) t^k}{k!}
    = e^{zt -\frac{t^2}{2}}.
  \end{cases}
\end{equation}
Thus, \cref{eq:vc'-formula} implies
\begin{equation}
  \label{eq:v-k'-summation}
  \sum_{k = 0}^\infty V_k' z^k =
  \begin{cases}
    \begin{rcases}
      \displaystyle \sum_{k = 0}^\infty \frac{V_1^k z^k}{k!} = e^{V_1 z}, &
      \xi = 0 \\
      \displaystyle \sum_{k = 0}^\infty \frac{\sqrt{\xi}^k H_{e_k}
        \left(\frac{V_1}{\sqrt{\xi}} \right) z^k}{k!} =
      e^{V_1 z - \frac{\xi z^2}{2}}, & \xi \neq 0
    \end{rcases}
    = e^{V_1 z - \frac{\xi z^2}{2}}.
  \end{cases}
\end{equation}

With help of \cref{lem:Vk'}, we prove the following tail bound.
\begin{lemma}
  \label{lem:v-k-summation-tail}
  With all parameters satisfying \cref{main-parameters}, w.p. $1 - o(1)$,
  \begin{equation*}
    \abs{ \sum_{k = \cut + 1}^{\infty} V_k'\, z^k } =
    n^{-\omega(1)}.
  \end{equation*}
\end{lemma}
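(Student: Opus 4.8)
The plan is to reduce the statement to a deterministic estimate on a high‑probability event and then bound a rapidly decaying series. First I would condition on the event $E = \{\abs{V_1} \le \theta\}$, where $\theta = \theta(n) = \ln\ln n$ is the quantity fixed in \cref{main-parameters}. Since $\theta(n) = \omega(1)$, \cref{lem:V1} gives $\Pr(E) \ge 1 - o(1)$, so it suffices to show that $\bigl\lvert \sum_{k = \cut+1}^\infty V_k' z^k \bigr\rvert = n^{-\omega(1)}$ holds on $E$.

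On $E$ the hypotheses of \cref{lem:Vk'} are met: $\theta \ge 1$ for large $n$; $\theta = \ln\ln n = o(\sqrt[4]{\ln n})$; and $\abs{V_1} \le \theta$ by definition of $E$. Applying \cref{lem:Vk'} with, say, the constant $\tau = 1$, we obtain $\abs{V_k'} \le n\, k^{-k/4}$ for all $k \in \N$ and all sufficiently large $n$, with the convention $0^0 = 1$. Hence, on $E$,
\[
  \Bigl\lvert \sum_{k = \cut+1}^\infty V_k'\, z^k \Bigr\rvert
  \le \sum_{k = \cut+1}^\infty \abs{V_k'}\, \abs{z}^k
  \le n \sum_{k = \cut+1}^\infty \bigl( \abs{z}\, k^{-1/4} \bigr)^k .
\]
Next I would invoke the remaining parameter constraints: $\cut = \ln n + \ln\frac{1}{\eps} \ge \ln n$ (as $\eps < 1$) and $\abs{z} \le (\ln n)^c$ with $c < \frac18$. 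For every $k \ge \cut + 1$ this yields $\abs{z}\, k^{-1/4} \le (\ln n)^{c - 1/4} \le (\ln n)^{-1/8}$, which is below $\frac12$ for $n$ large. The tail is therefore dominated by a geometric series with ratio $r = (\ln n)^{c-1/4}$, giving a bound of $\frac{r^{\cut+1}}{1-r} \le 2 r^{\cut} \le 2 (\ln n)^{-\cut/8}$, so the whole expression is at most $2 n\, (\ln n)^{-\cut/8}$.

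Finally, since $\cut \ge \ln n$ we have $(\ln n)^{-\cut/8} \le (\ln n)^{-(\ln n)/8} = \exp\bigl(-\tfrac18 \ln n \ln\ln n\bigr) = n^{-(\ln\ln n)/8}$, so the bound becomes $2 n^{\,1 - (\ln\ln n)/8} = n^{-\omega(1)}$, as claimed. The only point needing a little care is that the crude polynomial prefactor $n^\tau$ produced by \cref{lem:Vk'} is swamped; this works precisely because $\cut$ was chosen to be at least logarithmic in $n$, so that $(\ln n)^{-\cut}$ already decays faster than any fixed power of $n$. Beyond this bookkeeping there is no genuine obstacle — one could equally run the argument with the cruder bound $\abs{V_k'} \le \max(1,\abs{V_1}^k)(k/e^2)^{-k/2}$ from \cref{lem:Vk'-V1} together with $\abs{V_1} \le \theta$, at the cost of slightly messier constants.
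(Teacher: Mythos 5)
Your proposal is correct and follows essentially the same route as the paper: invoke \cref{lem:V1} with $\theta = \ln\ln n$ to get $\abs{V_1}\le\theta$ with probability $1-o(1)$, apply \cref{lem:Vk'} to obtain $\abs{V_k'} \le n^{\tau}k^{-k/4}$, and then control the tail $\sum_{k>\cut}\abs{z}^k k^{-k/4}$ by a geometric series with ratio below $\tfrac12$, which the choice $\cut \ge \ln n$ and $\abs{z}\le(\ln n)^c$, $c<\tfrac18$, makes $n^{-\omega(1)}$. The only cosmetic difference is that you bound each term uniformly by $r^k$ while the paper compares consecutive terms, which is the same estimate.
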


\begin{proof}
  Applying \cref{lem:V1} with $\theta(n) = \ln\ln n$ as in
  \cref{main-parameters}, w.p. $1 - o(1)$,
  \[
    \abs{V_1} \le \theta.
  \]
  In this case, it follows from \cref{lem:Vk'} that
  \begin{equation}
    \nonumber
    \abs{\sum_{k = \cut +1}^\infty V_k' z^k} \le  \sum_{k=\cut + 1}^\infty
    \abs{V_k'} \abs{z}^k
    \le  n^\tau \sum_{k=\cut + 1}^\infty k^{-\frac{k}{4}} \abs{z}^k.
  \end{equation}
  As in \cref{main-parameters}, $\abs{z}^8 \le \ln n < t$,
  which means for sufficiently large $n$,
  \begin{equation*}
    \frac{(k+1)^{-\frac{k+1}{4}}\abs{z}^{k+1}}{k^{-\frac{k}{4}}\abs{z}^k}
    = \frac{\abs{z}}{\sqrt[4]{k+1}}\left( 1 + \frac{1}{k} \right)^{-\frac{k}{4}}
    < \frac{\abs{z}}{\sqrt[4]{\cut}} \le \frac{1}{2}.
  \end{equation*}
  Thus,
  \begin{equation*}
    \abs{ \sum_{k=\cut+1}^\infty V_k' z^k} \le n^\tau t^{-\frac{t}{4}}\abs{z}^t
    = n^{-\omega(1)}.
  \end{equation*}
\end{proof}

Since $\vert e^z \vert = e^{\Re(z)}$ holds for $z \in \C$.
\cref{eq:v-k'-summation} says that the summation is small only if
the $\Re(V_1')$ is
small, which has small probability by concentration of $V_1$. Formally,
\begin{lemma}
  \label{lem:v'-summation-big}
  With all parameters satisfying \cref{main-parameters},
  \begin{equation*}
    \Pr\! \left[ \abs{ e^{V_1 z - \frac{\xi z^2}{2}} } \ge n^{-\gamma} \right]
    = 1 - o(1).
  \end{equation*}
\end{lemma}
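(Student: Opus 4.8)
The plan is to translate the event into a statement about a real part, using the identity $\abs{e^{w}} = e^{\Re(w)}$ valid for all $w \in \C$. Taking $w = V_1 z - \tfrac{\xi z^2}{2}$, the event $\abs{e^{w}} \ge n^{-\gamma}$ is exactly $\Re(V_1 z) - \tfrac12 \Re(\xi z^2) \ge -\gamma \ln n$. So I would reduce the lemma to showing that, with probability $1 - o(1)$, the quantity $\Re(V_1 z) - \tfrac12\Re(\xi z^2)$ is at least $-\gamma\ln n$; in fact I will bound its absolute value by $o(\ln n)$, which is stronger since $\gamma$ is a fixed positive constant.

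First I would dispose of the deterministic term $\tfrac12\Re(\xi z^2)$. By \cref{eq:xi-bound} we have $\abs{\xi} \le 1$, and by \cref{main-parameters} we have $\abs{z} \le (\ln n)^c$, so $\abs{\tfrac12\Re(\xi z^2)} \le \tfrac12 \abs{\xi}\abs{z}^2 \le \tfrac12 (\ln n)^{2c}$, which is $o(\ln n)$ because $c < \tfrac18 < 1$.

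Next I would handle the random term $\Re(V_1 z)$ via \cref{lem:V1}. Applying that lemma with the function $\theta(n) = \ln\ln n$ fixed in \cref{main-parameters} (which is $\omega(1)$, as \cref{lem:V1} requires), we get $\abs{V_1} \le \ln\ln n$ with probability $1 - o(1)$, hence $\abs{\Re(V_1 z)} \le \abs{V_1}\abs{z} \le (\ln\ln n)(\ln n)^c = o(\ln n)$. On this $(1-o(1))$-probability event, combining the two estimates yields $\Re(V_1 z) - \tfrac12\Re(\xi z^2) \ge -(\ln\ln n)(\ln n)^c - \tfrac12(\ln n)^{2c} \ge -\gamma\ln n$ for all sufficiently large $n$, since the left-hand negative quantity is $o(\ln n)$. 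This gives $\abs{e^{V_1 z - \xi z^2/2}} \ge n^{-\gamma}$ with probability $1-o(1)$, proving the lemma.

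I do not expect a genuine obstacle here: the whole argument rests on \cref{lem:V1}, i.e. the Chebyshev concentration of $V_1$ (the normalized average of the entries of $\bm{A}$) around its mean $0$, which is already established, together with the elementary observation that the parameter window $\abs{z} \le (\ln n)^c$ and the choice $\theta = \ln\ln n$ keep the exponent $V_1 z - \tfrac{\xi z^2}{2}$ of order $o(\ln n)$ with high probability. The only point to be careful about is that this requires $c < 1$ (comfortably satisfied, as $c < \tfrac18$) and that $\theta$ be simultaneously $\omega(1)$, to invoke \cref{lem:V1}, and $o((\ln n)^{1-c})$, to keep $\abs{V_1 z} = o(\ln n)$; both hold for $\theta = \ln\ln n$.
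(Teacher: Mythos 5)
Your proposal is correct and follows essentially the same route as the paper: reduce to the real part via $\abs{e^w}=e^{\Re(w)}$, absorb the deterministic term $\tfrac{\xi z^2}{2}$ using $\abs{\xi}\le 1$ and $\abs{z}\le(\ln n)^c$, and conclude by the Chebyshev concentration of $V_1$ in \cref{lem:V1}. The only cosmetic difference is the threshold used for $V_1$ (you take $\theta=\ln\ln n$, the paper compares $\abs{V_1}$ to $\gamma\ln n/(2\abs{z})$), which does not change the argument.
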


\begin{proof}
  We upper bound the probability
  \begin{equation}
    \label{eq:V'-sum-1}
    \begin{split}
      \Pr\! \left[\abs{ e^{V_1 z - \frac{\xi z^2}{2}} } <
        n^{-\gamma} \right]
      = & \Pr\! \left[ \Re\! \left(V_1 z - \frac{\xi z^2}{2} \right) <
        -\gamma \ln n \right] \\
      = & \Pr\! \left[\Re\! \left(V_1 \frac{z}{\abs{z}} \right) <
        -\frac{\gamma \ln n}{\abs{z}} + \frac{\Re(\xi z^2) }{2 \abs{z}}
      \right].\\
    \end{split}
  \end{equation}
  Since $\abs{z}^8 \le \ln n$, for large $n$, it holds that
  \begin{equation*}
    \frac{\Re(\xi z^2) }{2 \abs{z}} \le \abs{\frac{\xi z^2}{2 z}} \le
    \frac{\gamma \ln n}{2 \abs{z}}.
  \end{equation*}
  Therefore, we can continue \cref{eq:V'-sum-1} as
  \begin{equation*}
    \Pr\! \left[\abs{ e^{V_1 z - \frac{\xi z^2}{2}} } <
        n^{-\gamma} \right] \le \Pr\! \left[ \abs{V_1} > \frac{\gamma \ln
          n}{2 \abs{z}}\right],
  \end{equation*}
  which is easily shown to be $o(1)$ applying \cref{lem:V1}.
\end{proof}

\section{Difference of $V_k$ and $V_k'$}
In this section, we bound the difference between $V_k$ and $V_k'$.
To this end, we simply apply triangle inequality of absolute values and
induction repeatedly.

\begin{lemma}
  \label{lem:eps-bound}
  With $\theta(n) = \ln\ln n$ as in \cref{main-parameters},
  fixing any positive constant $\nu < \frac{1}{8}$, there exists a constant $n_k
  = n_k(\sigma_1, \sigma_2, \delta, \rho)$ such that for any $n \ge n_k$, w.p.
  $1 - o(1)$, the difference $\eps_k \triangleq \abs{V_k' - V_k}$ is bounded by
  \begin{equation*}
    \eps_k \le n^{-\nu} k^{-\nu k}
  \end{equation*}
  for any $0 \le k \le n$.
\end{lemma}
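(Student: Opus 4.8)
The plan is to pass to the high-probability event on which all the concentration lemmas hold, reduce the problem to a scalar recursive inequality for $\eps_k$, and then run a two-stage induction: first a crude bound $\eps_k\le n^{-\Omega(1)}$ uniform in $k$, which is afterwards bootstrapped into the claimed super-exponential-in-$k$ decay. \textbf{Reduction.} Fix constants $\phi\in(\nu,\tfrac12)$, $\Delta\in(\nu,\tfrac16)$, a small $\tau>0$ with $\tau<\min(\phi-\nu,\,3\Delta-\nu)$, and set $\kappa:=\min(\phi,3\Delta)>\nu$. By \cref{lem:bound-d2,lem:bound-dk,lem:V1} with $\theta=\ln\ln n$ as in \cref{main-parameters} and a union bound, with probability $1-o(1)$ we have simultaneously $|D_2-\xi|\le n^{-\phi}$, $|D_k|\le n^{-\Delta k}$ for every $k\ge 3$, and $|V_1|\le\theta$; on this event \cref{lem:Vk'} also applies, so $|V_k'|\le n^{\tau}k^{-k/4}$ and $|V_k'|\le e^{2\theta^2}=n^{o(1)}$. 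Subtracting the two recursions \cref{eq:v-k-recursion} and \cref{eq:v-k'-recursion}, using $V_1'=V_1$, writing $V_{k-2}D_2-V_{k-2}'\xi=D_2(V_{k-2}-V_{k-2}')+V_{k-2}'(D_2-\xi)$, and $|V_j|\le|V_j'|+\eps_j$ in the tail sum, the triangle inequality yields
\[
k\,\eps_k\ \le\ |V_1|\,\eps_{k-1}+|D_2|\,\eps_{k-2}+|V_{k-2}'|\,|D_2-\xi|+\sum_{i=2}^{k-1}\bigl(|V_{k-1-i}'|+\eps_{k-1-i}\bigr)\,|D_{i+1}|
\]
for all $k\ge 2$, with $\eps_0=\eps_1=0$; in particular $\eps_2=\tfrac12|D_2-\xi|\le\tfrac12 n^{-\phi}$.

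\textbf{Stage 1: a crude uniform bound.} Using only $|V_j'|\le e^{2\theta^2}$, $|D_2|\le|\xi|+n^{-\phi}\le 2$, and bounding the tail sum by a geometric series in $i$ (so it is dominated by its $i=2$ term), the master inequality reads $k\eps_k\le\theta\eps_{k-1}+2\eps_{k-2}+n^{-\kappa+o(1)}+o(1)\cdot\max_{j<k}\eps_j$. A strong induction then gives $\eps_k\le n^{-\kappa+o(1)}$ for all $0\le k\le n$: once $k\gtrsim\theta$ the prefactor $\theta/k<1$ forbids any growth, while over the short range $k\lesssim\theta$ one accumulates at most a factor $\theta^{O(\theta)}=e^{O(\ln\ln n\cdot\ln\ln\ln n)}=n^{o(1)}$, which is absorbed into the fixed slack exponent $\kappa-\nu>0$.

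\textbf{Stage 2: the sharp decay.} Put $K^\ast:=\lfloor\sqrt{\ln n}\rfloor$. For $k\le K^\ast$, Stage 1 already suffices, because $\nu k\ln k\le\nu K^\ast\ln K^\ast=o(\ln n)$ and hence $n^{-\kappa+o(1)}\le n^{-\nu}k^{-\nu k}$. For $K^\ast<k\le n$ I run a strong induction with hypothesis $\eps_j\le n^{-\nu}j^{-\nu j}$ for all $j<k$, using two elementary facts: $(k-c)^{-\nu(k-c)}\le e^{c\nu}k^{c\nu}\,k^{-\nu k}$, and, since $\nu<\tfrac18<\tfrac14$, $(k-c)^{-(k-c)/4}\le k^{-\nu k}$ for $k$ large. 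Then the terms $|V_1|\eps_{k-1}$ and $|D_2|\eps_{k-2}$ each contribute at most $\tfrac{O(\theta)}{k^{1-\nu}}\,n^{-\nu}k^{-\nu k}=o(1)\cdot n^{-\nu}k^{-\nu k}$ because $k>K^\ast\gg\theta^{1/(1-\nu)}$; the term $|V_{k-2}'|\,|D_2-\xi|$ is at most $n^{\tau-\phi}(k-2)^{-(k-2)/4}\le n^{\tau-\phi+\nu}\cdot n^{-\nu}k^{-\nu k}=o(1)\cdot n^{-\nu}k^{-\nu k}$; and for the tail sum, reindexed as $n^{-\Delta k}\sum_{j=0}^{k-3}\bigl(|V_j'|+\eps_j\bigr)n^{\Delta j}$, one notes that $\sum_j(n^\Delta/j^{1/4})^j$ and $\sum_j(n^\Delta/j^\nu)^j$ are each dominated by their top term $j=k-3$ whenever $k-3$ lies below the respective peak index ($\approx e^{-1}n^{4\Delta}$, resp.\ $\approx e^{-1}n^{\Delta/\nu}>n$), which produces a bound of the right $n^{-3\Delta}k^{-\nu k}$ shape; and in the remaining range $k\gtrsim n^{4\Delta}$ the whole sum is annihilated by the prefactor $n^{-\Delta k}$, since $n^{\Delta k}\gg e^{n^{4\Delta}/(4e)}n^{\nu k}$ there. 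Adding the four contributions, each at most $\tfrac14\,k\cdot n^{-\nu}k^{-\nu k}$ for $n$ large, closes the induction.

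\textbf{Main obstacle.} The hard part is uniformity in $k$, concentrated at the two endpoints of the range $0\le k\le n$. For small $k\lesssim\ln\ln n$ the $1/k$ damping in the recursion is too weak to beat the $|V_1|\le\ln\ln n$ factor, so the sharp induction cannot be run directly there; one must instead feed in the crude bound, relying on $\eps_0=\eps_1=0$ and the polynomial slack $n^{\kappa-\nu}$ to swallow the unavoidable $\theta^{O(\theta)}$ accumulation. For large $k$ up to $n$, and especially for $k$ of order $n^{4\Delta}$, the convolution $\sum_i|V_{k-1-i}'|\,|D_{i+1}|$ is no longer governed by its leading geometric term, and one has to localize it near the peak of $(n^\Delta/j^{1/4})^j$ and verify that the $n^{-\Delta k}$ factor still dominates. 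Stitching the crude and sharp regimes together across the threshold $K^\ast$ is where most of the bookkeeping lives, and it is the source of the dependence of $n_k$ on the distribution parameters.
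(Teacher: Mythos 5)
Your proposal is correct and follows essentially the same route as the paper: on the joint high-probability event from \cref{lem:V1,lem:bound-d2,lem:bound-dk,lem:Vk'}, subtract the recursions \cref{eq:v-k-recursion,eq:v-k'-recursion} to get the same master inequality for $\eps_k$, then run a two-regime induction (a crude $n^{-\Omega(1)}$ bound for small $k$, the sharp $n^{-\nu}k^{-\nu k}$ bound for large $k$) with the convolution term controlled via the peak of $j^{-j/4}n^{\Delta j}$. The differences are only cosmetic: your threshold $\sqrt{\ln n}$ versus the paper's $\ln n/\ln\ln n$, your crude bound being carried to all $k\le n$, and slightly different parameter bookkeeping that avoids the paper's explicit constraint $\Delta>\tfrac{1}{8}$.
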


\begin{proof}
  Recall that $V_0 = V_0' \equiv 1$ and $V_1 \equiv V_1'$ by definition.
  This gives $\eps_0 = \eps_1 = 0$.

  For $k \ge 2$, the triangle inequality and the bound $\vert \xi \vert \le 1$
  as proved in \cref{eq:xi-bound} establish the following upper bound
  \begin{equation*}
    \begin{split}
      k\, \eps_k = & \abs{\bigl( V_{k-1}' V_1' - V_{k-2}' \xi \bigr) -
       \Bigl( V_{k-1}V_1 - V_{k-2}D_2 + \sum_{i=2}^{k-1} (-1)^i V_{k-1-i}
       D_{i+1} \Bigr)}\\
     \le & \abs{ V_{k-1}'V_1 - V_{k-1}V_1 - V_{k-2}'\xi +
       V_{k-2}\xi - V_{k-2}\xi + V_{k-2} D_2} + \sum_{i=2}^{k-1}
     \abs{V_{k-1-i} D_{i+1}}\\
     \le & \abs{V_1} \eps_{k-1} + \eps_{k-2} + \abs{V_{k-2}}
     \abs{D_2 - \xi} + \sum_{i=2}^{k-1} \abs{ V_{k-1-i} } \abs{
       D_{i+1}}.\\
   \end{split}
 \end{equation*}
 Therefore, we can bound $\eps_k$ as
 \begin{equation}
   \label{eq:eps-bound-1}
   \begin{split}
     \eps_k \le & \frac{1}{k} \Bigl(
     \abs{V_1} \eps_{k-1} + \eps_{k-2} + \eps_{k-2} \abs{D_2 - \xi} +
     \abs{V_{k-2}'} \abs{D_2 - \xi} +\\
     & \qquad \sum_{i=2}^{k-1} \abs{ V_{k-1-i}' } \abs{D_{i+1}} +
     \sum_{i=2}^{k-1} \eps_{k-1-i} \abs{D_{i+1}} \Bigr).\\
    \end{split}
  \end{equation}

  Choose $\tau$ and $\Delta$ such that
  \begin{equation}
    \label{parameters-in-difference-bound}
    \begin{cases}
      \tau > 0, \\
      \Delta > \nu, \\
      \frac{1}{8} < \Delta < \frac{1}{6}, \\
      2 \tau + \nu < 2 \Delta.
    \end{cases}
  \end{equation}
  Applying \cref{lem:V1} with $\theta(n) = \ln\ln n$ as assumed,
  \begin{equation}
    \nonumber
    \Pr \left( \abs{V_1} \le \theta \right) = 1 - o(1).
  \end{equation}
  Plus \cref{lem:bound-d2,lem:bound-dk}, w.p. $1 - o(1)$, it holds that
  \begin{equation}
    \label{eq:eps-bound-d}
    \begin{cases}
      \abs{V_1'} \le \theta,\\
      \abs{D_2 - \xi} \le n^{-2\Delta},\\
      \abs{D_k} \le n^{-k\Delta}.
    \end{cases}
  \end{equation}
  For the rest of this proof, we assume that \cref{eq:eps-bound-d} holds.
  In this case,
  \begin{equation}
    \label{eq:eps-bound}
    \begin{split}
      \eps_k & \le \frac{1}{k} \Bigl[ \abs{V_1'} \eps_{k-1} + \eps_{k-2} +
      \sum_{i=2}^k \bigl( \eps_{k-i} + \abs{V_{k-i}'} \bigr)
      n^{-\Delta i} \Bigr]\\
      & \le \frac{1}{k} \Bigl[ \theta (\eps_{k-1} + \eps_{k-2}) +
      \sum_{i=0}^{k-2} \bigl( \eps_{i} + \abs{V_{i}'} \bigr) n^{-\Delta (k-i) }
      \Bigr]\\
    \end{split}
  \end{equation}

  We prove the claim by considering two cases $k \le \frac{\ln n}{\ln\ln n}$
  and $k > \frac{\ln n}{\ln\ln n}$.

  We first apply induction for $k \le \frac{\ln n}{\ln \ln n}$. The base cases
  for $k = 0, 1$ holds simply because $\eps_0 = \eps_1 \equiv 0$.
  Assume $\eps_j < n^{-\nu} j^{-\nu j} < 1$ holds for any $j < k$,
  by the uniform upper bound on $V_k'$ proven in \cref{lem:Vk'}
  and \cref{eq:eps-bound}, we have
  \begin{equation*}
    \begin{split}
      \eps_k \le & \frac{1}{k} \Bigl[ \theta (\eps_{k-1} + \eps_{k-2}) +
      \sum_{i=2}^k \bigl( \eps_{k-i} + e^{\, 2\theta^2} \bigr) n^{-\Delta i}
      \Bigr]\\
      \le & \frac{1}{k} \Bigl[ \theta (\eps_{k-1} + \eps_{k-2}) +
      2 \sum_{i=2}^k e^{\, 2\theta^2} n^{-\Delta i}
      \Bigr]\\
      \le & \theta (\eps_{k-1} + \eps_{k-2}) +
      2\, e^{\, 2\theta^2} n^{-2\Delta}.
    \end{split}
  \end{equation*}
  Define $\theta' = 3\, e^{\, 2\theta^2} n^{-2\Delta}$.
  The above equation can be relaxed as
  \begin{equation*}
    \eps_k \le \theta(\eps_{k-1} + \eps_{k-2}) + \theta'.
  \end{equation*}
  Using an induction on $k$, it is easy to see that $\eps_k \le \theta^k \theta'
  3^k$ since $\theta(n) > 1$ for large $n$.
  That is, for large $n$,
  \begin{equation*}
    \eps_k
    \le 3^{k+1} \theta^k e^{\, 2\theta^2} n^{-2\Delta}
    < n^{-\nu} k^{-\nu k}
  \end{equation*}
  holds since $\nu < \Delta$ as in \cref{parameters-in-difference-bound}.

  Now consider the case when $k \ge \frac{\ln n}{\ln \ln n}$
  and we will prove by another induction on $k$.
  The base case for $k = \frac{\ln n}{\ln \ln n}$ is proven in the previous
  case.
  Assume $\eps_j \le n^{-\nu} j^{-\nu j}$ holds for all $j<k$,
  we then prove the bound for $j=k$ as follows.

  First, we bound the summation $A \triangleq \sum_{j=0}^{k-2} \abs{V_{j}'}
  n^{-\Delta (k-j)}$ as
  \begin{equation*}
    A \le \, n^\tau \sum_{j=0}^{k-2} j^{-\frac{j}{4}} n^{-\Delta(k-j)}
    \le \, n^{\tau - \Delta k} \sum_{j=0}^{k-2} j^{-\frac{j}{4}}n^{\Delta j}.
  \end{equation*}
  Define function $\psi(x) = x^{-\frac{x}{4}} n^{\Delta x}$.
  The above equation can be written as
  \begin{equation}
    \label{eq:eps-bound-2}
    A \le n^{\tau - \Delta k} \sum_{j=0}^{k-2} \psi(j).
  \end{equation}
  Computing the derivative of $\ln \psi(x)$, it is easy to see that $\psi(x)$
  is increasing for $x \in \bigl[ 0, \frac{n^{4\Delta}}{e} \bigr]$ and
  decreasing for $x \in \bigl[ \frac{n^{4\Delta}}{e}, \infty \bigr)$.
  If $k-2 \le \frac{n^{4\Delta}}{e}$, \cref{eq:eps-bound-2} can be bounded as
  \begin{equation*}
    \begin{split}
      A \le & \, n^{\tau - \Delta k} \, (k-1) \, \psi(k-2)\\
      \le & \, n^{\tau - k \Delta} \, (k-1) (k-2)^{-\frac{k-2}{4}}n^{(k-2)\Delta}\\
      \le & \, n^{\tau - 2 \Delta} \, k^{-\frac{k}{8}} \\
      \le & \, n^{2\tau - 2 \Delta} \, k^{-\frac{k}{8}},
    \end{split}
  \end{equation*}
  where the third step holds for sufficiently large $n$
  since $k \ge \frac{\ln n}{\ln\ln n}$.
  Similarly, if $k-2 > \frac{n^{4\Delta}}{e}$, \cref{eq:eps-bound-2} can be
  bounded as
  \begin{equation*}
    \begin{split}
      A \le & \, n^{\tau - \Delta k} \, (k-1) \, \psi
      \left( \frac{n^{4\Delta}}{e} \right)\\
      = & \, n^{\tau - k \Delta} \, (k-1) \, e^{\frac{n^{4\Delta}}{4e}}\\
      \le & \, n^{\tau - 2 \Delta}\, n^{-(k-2)\Delta}\, (k-1) \,
      e^{\frac{k-2}{4}} \\
      \le & \, n^{\tau - 2 \Delta}\, k^{-(k-2)\Delta}\, (k-1) \,
      e^{\frac{k-2}{4}}\\
      \le & n^{2\tau - 2\Delta} \, k^{-\frac{k}{8}},
    \end{split}
  \end{equation*}
  where the third step holds for the condition $k-2 > \frac{n^{4\Delta}}{e}$,
  and the last step holds for sufficiently large $n$
  since $\Delta > \frac{1}{8}$ as in
  \cref{parameters-in-difference-bound}.
  Combining the above two cases, the following inequality holds
  \begin{equation*}
    A \le n^{2\tau - 2\Delta} k^{-\frac{k}{8}}.
  \end{equation*}

  \Cref{eq:eps-bound} then implies that
  \begin{equation}
    \eps_k \le \frac{1}{k} \Bigl[ \theta (\eps_{k-1} + \eps_{k-2}) +
    \sum_{j=0}^{k-2} \eps_{j} n^{-\Delta (k-j)} + n^{2\tau - 2\Delta}
    k^{-\frac{k}{8}}\Bigr].
  \end{equation}
  We bound each term of the summation as follows.
  \begin{itemize}
  \item First,
    \begin{equation}
      \label{eq:eps-bound-case-1}
      \begin{split}
        \theta (\eps_{k-1} + \eps_{k-2}) & \le 2 \theta\, n^{-\nu} (k-2)^{-\nu
          (k-2)}\\
        & = \frac{1}{2}{k\, n^{-\nu} k^{-\nu k}} \cdot 4 \theta k^{2\nu-1} \left(
          1 + \frac{2}{k-2} \right)^{\nu (k-2)}\\
        & \le \frac{1}{2}{k\, n^{-\nu} k^{-\nu k}} \cdot 4 \theta k^{2\nu-1}
        e^{2\nu}\\
        & \le \frac{1}{2}{k\, n^{-\nu} k^{-\nu k}},
      \end{split}
    \end{equation}
    where the final step holds for sufficiently large $n$ since
    $\nu < \frac{1}{8} < \frac{1}{2}$ as in assumption.
  \item Second, since
    \begin{equation*}
      \frac{n^{-\nu - \Delta(k - j)} j^{-\nu j}}
        {n^{-\nu - \Delta(k - j - 1)} (j + 1)^{-\nu (j + 1)}}
      = (j+1)^\nu \frac{\left( 1 + \frac{1}{j} \right)^{\nu j}}{n^\Delta}
      \le \frac{(j+1)^\nu e^\nu}{n^\Delta} <
      \frac{1}{2},
    \end{equation*}
  for sufficiently large $n$ and the choice $\nu < \Delta$
  in \cref{parameters-in-difference-bound}, we have
  \begin{equation*}
    \bigsum_{j=0}^{k-2} \eps_j n^{-\Delta (k-j)}
    \le \bigsum_{j = 0}^{k - 2} n^{-\nu - \Delta(k - j)} j^{-\nu j}
    \le 2 n^{-\nu - 2 \Delta} (k - 2)^{-\nu(k - 2)}.
  \end{equation*}
  For sufficiently large $n$, we have
  \begin{equation*}
    \sum_{j=0}^{k-2} \eps_j n^{-\Delta (k-j)}
    \le \frac{1}{4} k\, n^{-\nu} k^{-\nu k}.
  \end{equation*}
  Here we use the fact that $\nu, \Delta > 0$.
  \item Lastly, for large $n$,
    \begin{equation*}
      n^{2\tau-2\Delta} k^{-\frac{k}{8}} \le \frac{1}{4} k\, n^{-\nu} k^{-\nu k}
    \end{equation*}
    holds because $2\Delta > 2\tau + \nu$ in
    \cref{parameters-in-difference-bound}
    and $\nu < \frac{1}{8}$ in the statement.
  \end{itemize}

  Adding the bounds in the above three cases, we have
  \begin{equation*}
    \eps_k \le n^{-\nu} k^{-\nu k}.
  \end{equation*}
  Then the lemma follows.
  \qedhere
\end{proof}

Then we use the above bound to prove \cref{eq:v-k-v-k'-sum}.
\begin{lemma}
  \label{lem:v-k-v-k'-sum}
  With parameters satisfying \cref{main-parameters}, w.p. $1 - o(1)$,
  \begin{equation*}
    \abs{ \sum_{k = 0}^\cut V_k z^k - \bigsum_{k = 0}^\cut V_k' z^k}
    = \mathcal{O}(n^{c-\nu}).
  \end{equation*}
\end{lemma}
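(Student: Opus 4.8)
The plan is to reduce the estimate to the pointwise bound on $\eps_k \triangleq \abs{V_k - V_k'}$ supplied by \cref{lem:eps-bound}. By the triangle inequality,
\[
  \abs{ \sum_{k=0}^\cut V_k\, z^k - \sum_{k=0}^\cut V_k'\, z^k }
  \le \sum_{k=0}^\cut \eps_k\, \abs{z}^k .
\]
Working on the probability-$(1-o(1))$ event underlying \cref{lem:eps-bound} (which in particular guarantees $\abs{V_1}\le\ln\ln n$), we have $\eps_k \le n^{-\nu} k^{-\nu k}$ for all $0\le k\le n$ and $\eps_0=\eps_1=0$ by definition, so the right-hand side is at most $n^{-\nu}\sum_{k=2}^{\cut}(\abs{z}/k^{\nu})^{k}$. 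It then remains to show that $\sum_{k\ge2}(\abs{z}/k^{\nu})^{k}=n^{o(1)}$, since $n^{-\nu}\cdot n^{o(1)}=\mathcal{O}(n^{c-\nu})$ for the fixed constant $c>0$.

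For the sum $\sum_{k\ge2}(\abs{z}/k^{\nu})^{k}$ I would use $\abs{z}\le(\ln n)^{c}$ from \cref{main-parameters} together with the fact $c<\nu$. Set $k_0\triangleq(2\abs{z})^{1/\nu}$; since $c/\nu<1$, this is $\mathcal{O}\bigl((\ln n)^{c/\nu}\bigr)=n^{o(1)}$. For $k\ge k_0$ one has $\abs{z}/k^{\nu}\le 1/2$, hence $\sum_{k\ge k_0}(\abs{z}/k^{\nu})^{k}\le\sum_{k\ge k_0}2^{-k}\le 2$. For the at most $k_0=n^{o(1)}$ terms with $2\le k<k_0$, a one-line maximization of the exponent $k\bigl(\ln\abs{z}-\nu\ln k\bigr)$ over $k>0$ gives $(\abs{z}/k^{\nu})^{k}\le\exp\!\bigl(\nu\,\abs{z}^{1/\nu}/e\bigr)\le\exp\!\bigl(\nu(\ln n)^{c/\nu}/e\bigr)=n^{o(1)}$ uniformly in $k$ (here $c<\nu$ is precisely what turns this into $n^{o(1)}$ rather than a genuine power of $n$); multiplying by the $n^{o(1)}$ number of such terms keeps the head $n^{o(1)}$. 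Summing the two ranges finishes the estimate.

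The only point that needs care is this last step: one cannot bound the sum crudely by $(\text{number of terms})\times(\text{largest term})$, because $\cut=\ln n+\ln\frac{1}{\eps}$ need not be $n^{o(1)}$ for extremely small $\eps$; splitting the sum at $k_0$ so that its tail is a geometric series with total independent of $\cut$ is what makes the argument robust. (If one permits $\cut>n$, the two partial sums also differ by the terms $V_k'z^k$ with $n<k\le\cut$, where $V_k=0$; by \cref{lem:Vk'} these are at most $n^{\tau}k^{-k/4}\abs{z}^{k}\le 2^{-k}$ for large $n$ and contribute only $o(1)$.) Otherwise the lemma is a routine combination of \cref{lem:eps-bound} with elementary inequalities, with all the probabilistic content already packaged in \cref{lem:eps-bound}.
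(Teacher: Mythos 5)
Your proposal is correct and follows essentially the same route as the paper: reduce to $\sum_{k}\eps_k\abs{z}^k$ via the triangle inequality and \cref{lem:eps-bound}, then split the sum into a short head bounded termwise and a tail that decays geometrically because $k^{-\nu k}$ eventually beats $\abs{z}^k$. The only differences are cosmetic — you split at $(2\abs{z})^{1/\nu}$ instead of the paper's $\ln n/\ln\ln n$, which gives the marginally sharper bound $n^{-\nu+o(1)}$ before relaxing to $\mathcal{O}(n^{c-\nu})$, and your parenthetical treatment of the case $\cut>n$ is a reasonable extra precaution that the paper leaves implicit.
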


\begin{proof}
  Let $M = \frac{\ln n}{\ln \ln n}$.
  Using \cref{lem:eps-bound}, w.p. $1 - o(1)$,
  \begin{equation}
    \begin{split}
      \abs{\sum_{k=0}^\cut \eps_k z^k} \le &
      \sum_{k=0}^M \eps_k \abs{z}^k + \sum_{k=M+1}^\cut \eps_k \abs{z}^k\\
      \le & n^{-\nu} \sum_{k=0}^M \abs{z}^k + n^{-\nu} \sum_{k=M+1}^\cut
      k^{-\nu k} \abs{z}^k.
    \end{split}
  \end{equation}

  For large $n$,
  \begin{equation*}
    \sum_{k=0}^M \abs{z}^k \le 2 (\ln n)^{cM} \le 2 n^c.
  \end{equation*}

  On the other hand, since $M = \frac{\ln n}{\ln\ln n}$ and
  $c < \nu$ as in \cref{main-parameters},
  it holds for large $n$ that
  \begin{equation}
    \nonumber
    \frac{(k+1)^{-\nu (k+1)} \abs{z}^{k+1}}{k^{-\nu k} \abs{z}^{k}} =
    (k+1)^{-\nu} \abs{z} \left( 1 + \frac{1}{k} \right)^{-\nu k} \le
    \frac{\abs{z}}{(M+1)^\nu} < \frac{1}{2}.
  \end{equation}
  Thus, for large $n$,
  \begin{equation}
    \sum_{k = M + 1}^t k^{-\nu k} \abs{z}^k \le M^{-\nu M} \abs{z}^M \le n^c,
  \end{equation}
  which means
  \begin{equation}
    \bigsum_{k = 0}^\cut \eps_k \abs{z}^k = \mathcal{O}(n^{c-\nu}).
  \end{equation}
\end{proof}

\bibliographystyle{alpha}
\bibliography{refs}

\end{document}